\newcommand{\Om}{\Omega}
\newcommand{\dw}{\,\delta\omega}
\newcommand{\dn}{\partial\n}
\newcommand{\udim}{\textbf{u}^{\star}} 
\newcommand{\pdim}{p^{\star}} 
\newcommand{\nabladim}{\nabla^{\star}\,} 
\renewcommand{\u}{\mathbf{u}} 
\renewcommand{\v}{\mathbf{v}}  
\newcommand{\uh}{\mathbf{u}_{h}} 
\newcommand{\vh}{\mathbf{v}_{h}}  
\newcommand{\xb}{x^{\star}}
\newcommand{\yb}{y^{\star}}
\newcommand{\zb}{z^{\star}}
\newcommand{\Temp}{\theta}
\newcommand{\Temph}{\theta_{h}}
\newcommand{\Tempd}{\Temp^{\star}}
\newcommand{\TempdC}{\Tempd_{C}}
\newcommand{\TempdH}{\Tempd_{H}}
\newcommand{\phidim}{\phi^{\star}}
\newcommand{\phih}{\phi_{h}}
\newcommand{\psih}{\psi_{h}}
\newcommand{\varphih}{\varphi_{h}}
\newcommand{\Cbf}{\phidim_\text{bf}}
\newcommand{\Cnp}{\phidim_\text{np}}
\newcommand{\Pe}{\text{Pe}}
\renewcommand{\Pr}{\text{Pr}}
\newcommand{\Ra}{\text{Ra}}
\newcommand{\Le}{\text{Le}}
\newcommand{\Sc}{\text{Sc}}
\newcommand{\St}{\text{St}}
\newcommand{\g}{{\mathbf g}}
\newcommand{\N}{\mathbf{N}}
\newcommand{\U}{\mathbf{U}}
\newcommand{\T}{\mathbf{T}}
\renewcommand{\P}{\mathbf{P}}
\newcommand{\rhonf}{\rho_\text{nf}}
\newcommand{\rhobf}{\rho_\text{bf}}
\newcommand{\rhonp}{\rho_\text{np}}
\newcommand{\cnp}{c_\text{np}}
\newcommand{\cbf}{c_\text{bf}}
\newcommand{\cnf}{c_\text{nf}}
\newcommand{\munfdim}{\mu^{\star}_\text{nf}}
\newcommand{\munf}{\mu_\text{nf}}
\newcommand{\mubf}{\mu_\text{bf}}
\newcommand{\knfdim}{k^{\star}_\text{nf}}
\newcommand{\knf}{k_\text{nf}}
\newcommand{\kbf}{k_\text{bf}}
\newcommand{\bnf}{\beta_\text{nf}}
\newcommand{\bnp}{\beta_\text{np}}
\newcommand{\bbf}{\beta_\text{bf}}
\newcommand{\n}{{\bf n}}
\newcommand{\aeps}{a^\varepsilon}
\newcommand{\M}{\mathcal{M}}
\newtheorem{theorem}{Theorem}
\newtheorem{proposition}{Proposition}
\journal{Journal}
\date{January 01,2021}
\begin{document}

\begin{frontmatter}

\title{Combined Newton-Raphson and Streamlines-Upwind Petrov-Galerkin iterations for nano-particles transport in buoyancy driven flow}



\author[labelMKR,Nucl]{M. K. RIAHI\corref{cor1}}
\ead{mohamed.riahi@ku.ac.ae}\cortext[cor1]{Corresponding author.}
\author[labelYA,Nucl]{M. Ali}
\author[labelYA,Nucl]{Y. Addad}
\author[labelEAN]{E. Abu-Nada}
\address[labelMKR]{Department of Applied Mathematics, Khalifa University, PO Box 127788, Abu Dhabi, UAE}
\address[labelEAN]{Department of Mechanical Engineering, Khalifa University, PO Box 127788, Abu Dhabi, UAE}
\address[labelYA]{Department of Nuclear Engineering, Khalifa University, PO Box 127788, Abu Dhabi, UAE}
\address[Nucl]{Emirates Nuclear Technology Center, Khalifa University, PO Box 127788, Abu Dhabi, UAE}

\begin{abstract}
The present study deals with the finite element discretization of nanofluid convective transport in an enclosure with variable properties. We study the Buongiorno model, which couples the Navier-Stokes equations for the base fluid, an advective-diffusion equation for the heat transfer, and an advection dominated nanoparticle fraction concentration subject to thermophoresis and Brownian motion forces. We develop an iterative numerical scheme that combines Newton’s method (dedicated to the resolution of the momentum and energy equations) with the transport equation that governs the nanoparticles concentration in the enclosure. We show that Stream Upwind Petrov-Galerkin regularization approach is required to solve properly the ill-posed Buongiorno transport model being tackled as a variational problem under mean value constraint. Non-trivial numerical computations are reported to show the effectiveness of our proposed numerical approach in its ability to provide reasonably good agreement with the experimental results available in the literature. The numerical experiments demonstrate that by accounting for only the thermophoresis and Brownian motion forces in the concentration transport equation, the model is not able to reproduce the heat transfer impairment due to the presence of suspended nanoparticles in the base fluid. It reveals, however, the significant role that these two terms play in the vicinity of the hot and cold walls.     
\end{abstract}

\begin{keyword}
 Nanofluid \sep Navier-Stokes equation \sep Newton-Raphson method \sep Advection dominated equation \sep Finite element method \sep Strem-Upwind Petrov-Galerkin.

\end{keyword}

\end{frontmatter}


\section{Introduction}
\label{S:Introduction}

Natural convection, or natural circulation, is a phenomenon in which fluid recirculates due to applied temperature difference, where hot fluid (light) tends to rise up, while colder fluid (heavy) tends to fall down. This phenomenon occurs in several engineering applications such as chips cooling, large compartment ventilation, passive cooling in heat exchangers, and further ocean dynamics and weather applications. 

The suspension of nano-sized particles in base fluids, a mixture referred to as nanofluid, represents an attractive method in heat transfer engineering problems during the last two decades~\cite{minkowycz2012nanoparticle,manca2010heat,kleinstreuer2016mathematical}. Several engineering applications in heat transfer were investigated including natural convection, combined convection, heat transfer in electronic cooling, and renewable energy \cite{
kleinstreuer2013microfluidics,buongiorno2006convective,sheremet2018natural,mahian2013review,li2008thermal,bairi2018effects,li2018effects,xu2014computational,bairi2018thermal,jabbari2017thermal}. Due to the high number of publications in using nanofluids to enhance the heat transfer rate in thermal engineering systems, several reviews were conducted, such as Jabbari et al. \cite{jabbari2017thermal} and Khodadadi et al. \cite{khodadadi2018comprehensive}, Fan and Wang \cite{fan2011review}, Kakaç and Pramuanjaroenkij \cite{kakacc2009review}, Buongiorno et al. \cite{buongiorno2009benchmark}, Sheikholeslami and Ganji \cite{sheikholeslami2016nanofluid}, and Manca et al. \cite{manca2010heat}.
The role of nanofluids in augmenting the heat transfer rate in forced convection is accepted in the research community where the nanofluids are found to be very useful in enhancing the performance of forced convective flows. Conversely, the role of nanofluids is still controversial in natural convection. For example, most theoretical studies reported enhancement in heat transfer due to the dispersion of the nanoparticles in base fluids. However, several experiments reported a deterioration in heat transfer due to the addition of nanoparticles to base fluids, Wen and Ding \cite{wen2004experimental}, Li and Peterson \cite{li2010experimental}, Ho et al. \cite{ho_liu_chang_lin_2010}, and Putra et al. \cite{putra2003natural}. In some experimental measurements, a significant increase in the effective thermal conductivity for the mixture of certain types of small solid particles with a diluted water has been noticed (see \cite{ho_liu_chang_lin_2010} and \cite{chon2005empirical}). Due to this claimed significant improvement of transport properties of the nanofluids, nanofluid technology has attracted many applications including the cooling of electronic chips, cooling of smaller internal combustion engines, and biomedical technology.

Furthermore, along its development and use, the nanofluid technology has been controversial on some of its aspects. In fact, increasing the nanoparticle concentration does not necessarily mean an improvement of the heat transfer rate. This is indeed, a debate that until today researchers still does not converge to a common conclusion on the the heat transfer enhancement using nanoparticles. Nonetheless, the experimental results of Ho et al. ~\cite{ho_liu_chang_lin_2010} provide a clear evidence that heat transfer enhancement using nanofluid is actually possible with a low nanoparticles concentration (i.e. for volume fractions below 2\%). For high Rayleigh number in particular, the dispersed nanoparticles in the base fluid are able to contribute to heat transfer rate enhancement of around 18\% when compared to pure water. For concentrations equal to or greater than 2\%, on the other hand, it was observed that the addition of nanoparticles would have a negative impact on the heat transfer rate. 

One of the mostly used mathematical model used to study the heat transfer of nanofluid is the the Buongiorno model \cite{buongiorno2006convective}. In this model, the flow around the nanoparticles is regarded as a continuum. It assumes that the only mechanisms causing the nanoparticles to develop a slip velocity with respect to the base fluid are; i) the Brownian diffusion resulting from continuous collisions between the nanoparticles and the molecules of the base fluid and ii) the thermophoresis representing the nanoparticles diffusion due to temperature gradient in the domain. Other slip mechanisms considered by Buongiorno in his analysis, namely, the diffusiophoresis, the lift force, the fluid drainage, and gravity settling are considered negligible. As a result, the model translates the fact that the Brownian diffusion and thermophoresis, are the only forces responsible for the diffusivity of nanoparticle concentrations, hence, together with the main stream they update the density of the nanofluid via a convection dominated partial differential equations (PDE).


In this paper, we use the model Buongiorno model to investigate the heat transfer enhancement using nanofluid. We develop and analyze a new numerical iterative scheme based on a finite element method (FEM) of the steady state PDEs. The nanofluid model at hand, consists then of four equations i) the continuity equation ii) the momentum equation, which is the classical Navier-Stokes equation subject to the buoyancy force, iii) the energy equation which is the convected heat equation and iv) the nanoparticle transport equation. The latter, is an advection dominated PDEs, as per the P{\'e}clet number (ratio of the convection rate and the diffusion rate) is very high, which leads to numerical spurious while using finite difference method or FEM \cite{galeao1988consistent}. The convected dominated nanoparticle transport equation is, indeed, ill-posed and has to be regularized in order to stabilize the calculation, see for instance \cite{YURUN199747} and also \cite{ERATH2019308,TENEIKELDER20181135} for an adaptive procedure based on a posteriori error estimate.   

The momentum and energy equations are coupled through the convection and buoyancy terms, while the nanoparticle transport equation plays the role of fluid density regulator that has a crucial role in the variation of the fluid viscosity, hence on the shear stresses of the fluid. Particles migration is also impacted by the thermophoresis forces and other subs-scale forces modeled through a Brownian motion. A theoretical mathematical functional analysis of the Buongiorno model is studied in~\cite{BANSCH2020124151} where a mollified regularizing problem is set up and is shown to be weakly convergent toward the nanofluid Buongiorno model. A similar nanofluid model has been derived and studied in \cite{bansch2019thermodynamically}.  

The resulting coupled system represents the dynamics of the nanofluid inside a differentially heated cavity. We supplement the problem at hand with the following appropriate boundary conditions; non-slip velocity, adiabatic horizontal walls, specifically heated and cooled walls (Dirichlet non homogeneous conditions), and the Neumann homogeneous boundary condition for the nanoparticle transport equation describing simply the fact that none of the particles is allowed to exit the enclosure (particles-flux is null).

For the numerical discretization of such Oberbeck-Boussinesq equations we refer to~\cite{shekar2015finite,balla2016finite,ullah2020finite}. We also refer to the book \cite{girault2012finite} for a complete finite element analysis for the Navier-Stokes equations. In the present work we consider the pair {\bf P}2–{\bf P}1 continuous Taylor–Hood elements \cite{taylor1973numerical} for the discretization of the momentum equation, and consider the {\bf P}1-continuous finite elements for the energy equation. The lowest order Taylor-Hood {\bf P}2-{\bf P}1 is one of the most popular FE pairs for incompressible fluid problems. Indeed, it satisfies the inf-sup stability condition for almost regular meshes and isotropic meshes with moderate aspect ratio \cite{apel2003stability}. 

The momentum and energy equations are solved iteratively through Newton-Raphson method see for instance \cite{girault2012finite}. We use P2-continuous FE for the nanoparticle transport equation for regularization aspects that we will discuss subsequently.

The rest of the paper is organized as follows: We present in Section \ref{EqSet} the mathematical equations that govern the heat transfer in a cavity with variable properties. In section \ref{NumAlgo}, we develop the variational formulation of the governing equations. We discuss in section \ref{FEM} the finite element discretization and the overall algorithm coupling Newton and SUPG methods. Then, in section \ref{NumRes} we report the stability and convergence of our numerical scheme in addition to its validation using available experimental and numerical data. Finally we close our paper with some concluding remarks.

Throughout the paper we shall use the following notations 
\section*{Nomenclature table}
\begin{centering}
\begin{tabular}{ll}
   $(x,y)$ & coordinate system (m) \\
   $\n$ & outward normal vector \\
   $L$ & width of the cavity \\
   $\g$ & gravitational acceleration ($m s^{-2}$)\\
   $\udim$ & dimensional velocity ($ms^{-1}$)\\
   $\u$ & horizontal velocity component  \\
   $p^\star$ & dimensional pressure term \\
   $p$ & dimensionless pressure term \\
   $\Tempd$ & temperature (K) \\
   $\Temp$ & temperature   \\
   $\phidim$ & nanoparticles' volume concentration ($\%$) \\
   $\phi$ & nanoparticles' volume concentration   \\
   $\phi_{b}$ & nanoparticles' bulk volume concentration ($\%$) \\
   $\mathcal{C}$ & denoting Correlation \\
   $c_p$ & specific heat ($J\, kg^{-1} K^{-1}$)\\
     $Nu$ & Nusselt number, $Nu=-(L/\Delta \Temp)(\partial\Temp/\partial x)_{w} $  \\
  $\Ra$  & Rayleigh number  \\
  $\Pr$ & Prantl number \\
  $\Pe$ & Peclet number\\
  $\Le$ & Lewis number \\
  $\Sc$ & Schmit number  \\
 \end{tabular}  
\begin{tabular}{ll}   
   $\rho$ & density ($kg\, m^{-3}$)\\
  $k$ & thermal diffusivity ($m^2s^{-1}$)\\
  $\mu$ & dynamic viscosity ($N\, kg^{-1} s^{-1}$) \\
  $\alpha$ & kinematic viscosity ($m^2s^{-1}$) \\
  $\beta$ & volumetric expansion coefficient of the fluid\\
  $D^\star_{\omega}$ & dimensional Brownian diffusion coefficient \\
  $D^\star_{\Temp}$ & dimensional thermophoretic diffusion coefficient \\
  $D_{\omega}$ & Brownian diffusion coefficient \\
  $D_{\Temp}$ & thermophoretic diffusion coefficient \\
  $\pi^m$ & variables for the momentum equation\\
  $\pi^e$ & variables for the energy equation\\
  $\pi^p$ & variables for the np transport equation\\
  $^\star$ &  dimension superscript\\
  p & particle subscript \\
  nf & nanofluid subscript\\
  bf & base fluid subscript\\
  np & nanoparticle subscript\\
  C & cold wall\\
  H & hot wall
\end{tabular}
\end{centering}

\section{Equations settings}\label{EqSet}
Following Buongiorno model for incompressible nanofluid flow and using Boussinesq approximation for density, we describe the natural convection in a deferentially heated squared cavity. Thus the domain of computation $\Om$ (Figure \ref{geometry}) is a simply connected domain with Lipschitz boundary $\partial\Om$. 
It is assumed that the fluid has reached a statistically time invariant state, a reason for which we shall study the steady state of the problem. The description of the dimensional and non-dimensional problems is reported in the sequel subsections. 

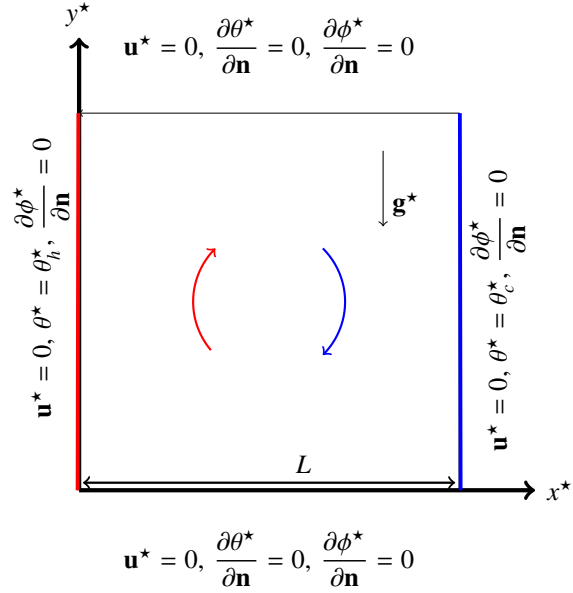
\begin{figure}[!htbp]
\centering
\begin{tikzpicture}
 \draw[->,ultra thick] (0,0)--(6,0) node[right]{$x^\star$};
 \draw[->,ultra thick] (0,0)--(0,6) node[above]{$y^\star$};
 \draw[->,ultra thin] (5,0)--(5,5)--(0,5);
 \node at (2.5,-.4) [below] {$ \udim=0,\,\dfrac{\partial\Tempd}{\partial \n}=0,\,\dfrac{\partial\phidim}{\partial\n}=0$};
  \node at (2.5,5.4) [above] {$ \udim=0,\,\dfrac{\partial\Tempd}{\partial \n}=0,\,\dfrac{\partial\phidim}{\partial\n}=0$};
  \draw[line width=0.5mm, red] (-.02,0)--(-0.01,5);
  \node at (-0.5,4.8) [left, rotate=90] {$ \udim=0,\,\Tempd=\Tempd_h,\,\dfrac{\partial\phidim}{\partial\n}=0$};
  \draw[line width=0.5mm, blue] (5.02,0)--(5.01,5);
 \node at (5.5,0.4) [right, rotate=90] {$ \udim=0,\,\Tempd=\Tempd_c,\,\dfrac{\partial\phidim}{\partial\n}=0$};
 \draw[->] (4,4.5)--(4,3.5) node[above right] {$\g^{\star}$};
   \draw[<-,line width=0.5mm,blue,thick,domain=-45:45] plot ({2.5+cos(\x)}, {2.5+sin(\x)});
   \draw[<-,line width=0.5mm,red,thick,domain=135:220] plot ({2.5+cos(\x)}, {2.5+sin(\x)});
   \draw[line width=.3mm,<->] (0.05,0.1)--(4.95,0.1) node[left=2cm, above] {$L$};
\end{tikzpicture}
\caption{Geometric sketch of the considered differential cavity and the boundary conditions for the velocity, temperature and nanoparticle concentration.}\label{geometry}
\end{figure}

\subsection{Governing dimensional equations}
Our focus will be on the long time statistically invariant state known as steady state. The flow is therefore considered time-invariant. The heat transfer equations governing the physics consist of a coupling between i) the momentum equation associated with the continuity equation, ii) the heat equation and iii) the nanoparticle transport equation and write as follows:
\begin{equation}\label{Dimensional-model}
\left\{
\begin{array}{cllr}
\nabladim\cdot \udim &\hspace{-.1in}=\hspace{-.1in}&0 & \text{ on } \Om\\
\left(\udim\cdot\nabladim\right)\udim &\hspace{-.1in}=\hspace{-.1in}& \dfrac{-1} {\rhonf} \nabladim{p^{\star}}  + \dfrac{1}{\rhonf}\nabladim\cdot\bigg( \munfdim\left(\nabladim\udim+(\nabladim\udim)^t\right) \bigg) + \g^{\star}\dfrac{(\rho_{\infty}-\rho_\text{nf})}{\rhonf}& \hspace{-.1in}\text{ on } \Om\\
 \left(\udim\cdot\nabladim \Tempd\right) &\hspace{-.1in}=\hspace{-.1in}&\dfrac{1}{(\rho c_\text{p})_\text{nf}} \nabladim\cdot\left(\knfdim(\Tempd,\phidim)\nabladim \Tempd \right) 
								+   \left( D^{\star}_{\omega}\nabladim\phidim\cdot\nabladim \Tempd+\dfrac{ D^{\star}_{\Tempd}}{\TempdC} \nabladim \Tempd \cdot \nabladim \Tempd \right)& \hspace{-.1in}\text{ on } \Om\\
	\left( \udim\cdot\nabladim \phidim\right) &\hspace{-.1in}=\hspace{-.1in}& \nabladim\cdot\left( D^{\star}_{\omega}\nabladim \phidim  + \dfrac{D^{\star}_{\Tempd}}{\TempdC}\nabladim\Tempd  \right) & \hspace{-.1in}\text{ on } \Om\\
	\rho_{\infty}-\rho_\text{nf}&\hspace{-.1in}=\hspace{-.1in}&(\rho\beta)_\text{nf}(\Tempd-\Tempd_\text{c})&
\end{array}
\right.
\end{equation}
Supplemented by the non-slip boundary condition for the fluid velocity, Dirichlet non-homogeneous for the temperature differential in two opposite sides of the domain, Neumann homogeneous for the adiabatic boundaries, and all over Neumann homogeneous for the nanoparticle concentration. Figure~\ref{geometry} showcases the considered geometric domain for the numerical simulation.   
In Eq.\eqref{Dimensional-model} the effective density, specific heat, and thermal expansion coefficients of nanofluid are given by 
\begin{eqnarray}
\rhonf  &=&  \rho_\text{bf}(1-\phidim)  +  \rho_\text{np}\phidim\label{rhonf},\\
(\rho c_\text{p})_\text{nf}  &=& (\rho c_\text{p})_\text{bf}(1-\phidim) + (\rho c_\text{p})_\text{np}\phidim,\\
(\rho\beta)_\text{nf}  &=& (\rho\beta)_\text{bf}(1-\phidim) + (\rho\beta)_\text{np}\phidim.\label{rhobetanf}
\end{eqnarray}

The Brownian diffusion coefficient $D^{\star}_{\omega}$ is defined using the Einstein-Stokes equation in the following form 
\begin{equation}\label{BrownianDb}
D^{\star}_{\omega}(\Tempd) = \dfrac{k_\text{b}}{3\pi\mu_\text{bf}(\Tempd)d_\text{p}} \Tempd,
\end{equation}
where $k_\text{b}=1.3807\cdot 10^{-23} J/K$ stands for the Boltzmann's constant. The thermophoretic diffusion coefficient $D^{\star}_{\Tempd}$ is defined as 
\begin{equation}\label{thermphoDt}
D^{\star}_{\Temp}(\Tempd,\phidim)=\dfrac{0.26 k_\text{bf}}{2k_\text{bf}+k_\text{p}}\dfrac{\mu_\text{bf}(\Tempd)}{\rho_\text{bf}}\, \phidim.
\end{equation}
The dynamic viscosity for water is defined as follows 
\begin{equation}\label{muf}
	\mu_\text{bf}(\Tempd) = 2.414 \cdot 10^{-5}\cdot 10^{247.8/(\Tempd-140)}.
\end{equation}

For the thermal conductivity and viscosity of nanofluid many correlations have been derived ~\cite{ho2010natural,abu2010effect,chon2005empirical,khanafer_vafai_2017}. These correlations are generally nonlinear with respect to their variables. For this reasons we shall keep the correlation as a general function as such for the effective viscosity for nanofluid: 
$\frac{\munf}{\mubf} = \mathcal{C}_{\munf}(\phi,\Temp)$ and for the effective thermal conductivity $\frac{\knf}{\kbf}  = \mathcal{C}_{\knf}(\phi,\Temp)$ both as a potential correlations of $\phi$ and $\Temp$.

\subsection{Dimensionless equations}
We start the non-dimensionalization of the equations by defining the following non-dimensional variables:
\begin{equation}\label{DefNonDimensional}
\begin{array}{llllll}
\u  &=& (L\slash \alpha)\, \udim&,\qquad   x &=& \xb\slash L \\
y &=& \yb\slash L &,\qquad z &=& \zb\slash L \\
\phi &=& \phidim \slash \phi_{0}&,\qquad \Temp &=& (\Tempd -\TempdC)\slash (\TempdH-\TempdC)\\ 
\munf (\Temp,\phi) &=& \munfdim(\Tempd,\phidim)\slash \mu_\text{bf}(\TempdC)&,\qquad k_\text{nf} (\Temp,\phi) &=& \knfdim(\Tempd,\phidim)\slash k_\text{bf}\\
{D}_{\omega} &=& D^{\star}_{\omega} (\Tempd)\slash D^{\star}_{\omega} (\TempdC)&,\qquad D_{\Temp} &=& D^{\star}_{\Temp} (\Tempd,\phidim)\slash D^{\star}_{\Temp} (\TempdC,\phi_{0}),
\end{array}
\end{equation}

Hence the corresponding equation writes

 \begin{equation}\label{Dimensionless-model}
\left\{
\begin{array}{clll}
\nabla\cdot \u &=&0\\
\left(\u\cdot\nabla\right)\u &=&-\pi^\text{m}_{1}(\phi) \nabla{p}  +\pi^\text{m}_{2}(\phi) \nabla\cdot\bigg(\munf(\Temp,\phi)\left(\nabla\u+(\nabla\u)^t\right) \bigg) + \pi^\text{m}_{3}(\phi)\Temp \vspace{.08in}\\
 \left(\u\cdot\nabla \Temp\right) &=& \pi^\text{e}_{1}(\phi)\nabla\cdot\left(\knf(\Temp,\phi)\nabla \Temp\right) 
								+ \left( \pi^\text{e}_{2}(\phi) {D}_{\omega}\nabla\phi\cdot\nabla \Temp+\pi^\text{e}_{3}(\phi) D_{\Temp} \nabla \Temp\cdot \nabla \Temp\right)\vspace{.13in}\\
	\left( \u\cdot\nabla \phi\right) &=&\pi^\text{p}_{1} \nabla\cdot \left(  {D}_{\omega}\nabla \phi \right)  + \pi^\text{p}_{2}\nabla\cdot\left({D}_{\Temp}\nabla\Temp  \right), 
	\end{array}
\right.
	\end{equation}
where the variable properties are defined as follows:
\begin{equation*}	
\begin{array}{llll}
&\pi^\text{m}_{1} (\phi)=& (1-\phi + \phi\dfrac{\rhonp}{\rhobf})^{-1}\\
&\pi^\text{m}_{2} (\phi) =&\Pr \pi^\text{m}_{1} \\
&\pi^\text{m}_{3} (\phi) =& \Pr\Ra_\text{nf} \left( \dfrac{(1-\phidim)}{ (1-\phidim)+\phidim \dfrac{\rhonp}{\rhobf}}
+ \dfrac{ \phidim}{ (1-\phidim) +\phidim \dfrac{\rhonp}{\rhobf}}\dfrac{\bnp}{\bbf} \right) \\
&\pi^\text{e}_{1} (\phi) =& (1-\phi + \phi\dfrac{\rhonp\Cnp}{\rhobf\Cbf})^{-1}\\
&\pi^\text{e}_{2} (\phi) =& \dfrac\Pr\Sc\phi_\text{b} \left((1-\phi)\dfrac{\rhobf\Cbf}{\rhonp\Cnp} +\phi \right)  \vspace{0.1in}\\
&\pi^\text{e}_{3} (\phi) =& \St \dfrac{\Pr}{\Sc} \left(\dfrac{\TempdH-\TempdC}{\TempdH}\right) 
						\left((1-\phi)\dfrac{\rhobf\Cbf}{\rhonp\Cnp} +\phi\right)^{-1} \\
&\pi^\text{p}_{1} =& \dfrac{D_{\omega^0}} {\alpha} =\dfrac{1}{\Le} \vspace{0.1in}\\
&\pi^\text{p}_{2} =& \St \dfrac{\Pr}{\Sc} \left(\dfrac{\TempdH-\TempdC}{\TempdC}\right) \dfrac{1}{\phi_\text{b}}.
\end{array}
\end{equation*}
and the non-dimensional numbers are given by: 
\begin{equation*}	
\begin{array}{llll}
&\text{Rayleigh number} \quad & \Ra =& g(\rho\beta)_\text{bf}(\TempdH-\TempdC) L^{3} \slash (\mu_\text{bf}(\TempdC)\alpha),\\
&\text{Prantle number} \quad & \Pr =& \mu_\text{bf}(\TempdC)\slash(\rho_\text{bf}\alpha),\\
&\text{Lewis number} \quad & \Le =& \alpha\slash D^{\star}_{\omega}(\TempdC),\\
&\text{Peclet number} \quad & \Pe_\text{np} =&  {\|\u\|_{2}}\slash{\left(\pi_{1}^{p}D_{\omega}\right)}.\\
\end{array}
\end{equation*}
We proceed with the numerical discretization of the governing equations. We propose to decouple the whole system into two parts; The first part consists in solving the momentum equations along with the energy equation, the second part consists in solving the stabilized nanoparticle equation with an additional constrain to assess that the bulk amount of nanoparticle in the fluid is constant during the calculation. The main motivation behind this decoupling is to reduce the non-linearity of the equations (momentum and energy) in term of the nanoparticle volume fraction and give room to the numerical scheme to stabilize itself iteratively for the transport equation. Finally, we re-iterate through these two parts in order to update the coefficients and the density of the fluid. More details are described in section \ref{NumAlgo} below.
\section{Variational formulations and approximation tools settings}\label{NumAlgo}
The numerical scheme we propose in this work is based on the finite element discretization of the dimensionless equations. As the steady problem at hand presents non-linearity in term of the velocity and temperature, a direct linear algebra solver, is, therefore, impractical to solve the coupled system.  One has to relax the non-linearity and set up an iterative numerical scheme that converges to the desired solution. Newton's method plays a key role here and has been shown to be very efficient in term of convergence \cite{girault2012finite}. 

We consider homogeneous Dirichlet boundary conditions for the velocity, i.e. $u=0$ on $\partial\Om$. 
Let us consider the following Hilbert spaces for the temperature, velocity and pressure as follows: 
$$\mathbf{T}=H^{1}(\Om) \quad, {\bf U}=H^{1}_{0}(\Om)\times H^{1}_{0}(\Om), \quad {\bf P}=\{ p \in L^{2}(\Om)\, | \int_{\Om} p \dw=0\,\}.$$
where $$L^{2}(\Om)=\{u \,|\, \int_{\Om}|u|^2 \dw < \infty \},$$ and 
$$ H^{1}(\Om)=\left\{u\in L^{2}(\Om) \,| \,\nabla u \in L^{2}(\Om)\right\}, $$ and 
$$H^{1}_{0}(\Om)=\{u\in H^{1}(\Om) \,|\, u_{|\partial\Om}=0 \}.$$
The Sobolev space $L^2(\Om)$ is endowed with the usual inner product denoted by the duality pairing $(\cdot,\cdot)$, that generates the $L^2$-norm $\|\cdot\|_{2}$.

\subsection{Newton's (optimize then discretize) Methods for the solution of the momentum and energy equations}
In order to present a general algorithm, we shall separate nanoparticles equation from the momentum and energy equations while we are processing. The generality here is mainly targeting the use of any correlations (as there is a variety available in literature). This gives our approach a flexibility to treat different type of nanofluid and consider different (possibly highly nonlinear) thermal and viscosity correlations.
The idea behind the following notation and calculation of the tangent equations is that we formulate the problem as a multivariable function $\mathbf{F}$ and use the classical Newton-Raphson iterations that reads  
$$\begin{bmatrix}\u^{k+1}\\ p^{k+1}\\ \Temp^{k+1}\end{bmatrix} = \begin{bmatrix}\u^{k}\\ p^{k}\\ \Temp^{k}\end{bmatrix} - \left(\mathbf{DF}\left((\u^{k},p^{k},\Temp^{k})\right)\right)^{-1} \mathbf{F}\left((\u^{k},p^{k},\Temp^{k})\right),$$
where $\mathbf{DF}(X^{k})$ stands for the Jacobian (isomerism in ${\bf U}\times{\bf P}\times{\bf T}$).

In practice, we define the coupled momentum-energy variational form as such for every trial $(\v,q,\zeta)\in {\bf U}\times{\bf P}\times{\bf T}$ we have
\begin{equation}
\begin{array}{lll}
 \mathcal{F}_{(\v,q,\zeta)}:{\bf U}\times{\bf P}\times{\bf T}\longrightarrow\mathbb{R}\\
(\u,p,\Temp)\longmapsto \mathcal{F}_{(\v,q,\zeta)}(\u,p,\Temp):=&  ((\u\cdot\nabla\u), \v )
 + \pi^{m}_{2} \left( \munf(\Temp,\phi) \left(\nabla\u+(\nabla\u)^t\right) , \nabla \v \right)\\
\quad&- ( \pi^{m}_{1} \nabla p, \v ) -\left(  \pi^{m}_{1} \u , \nabla q \right)
	  -\left( \pi^{m}_{3} \Temp (\u\cdot \textbf{e}_{z}),\v\right) + \varepsilon ( p , q )\\
\quad& + (  \left(\u\cdot\nabla\Temp\right),\zeta ) 
                 +\displaystyle ( \pi^\text{e}_{1}(\phi)\left(\knf(\Temp,\phi)\nabla \Temp\right) \cdot \nabla \zeta   ) \\
\quad&+ ( \pi^\text{e}_{2}(\phi) {D}_{\omega} (\nabla\phi , \nabla \Temp), \zeta )
	   +\displaystyle\left(\pi^\text{e}_{3}(\phi) D_{\Temp}  \left(\nabla\Temp\cdot \nabla \Temp\right),\zeta\right).
\end{array}
\end{equation}
Then we define the vector field $\mathbf{F}(\u,p,\Temp)\in {\bf U}\times{\bf P}\times{\bf T}$ 
as the vector that satisfies the following inner product formula for every $(\v,q,\zeta)\in {\bf U}\times{\bf P}\times{\bf T}$
\begin{equation*}
   ( \mathbf{F}(\u,p,\Temp), 
   \begin{bmatrix}\v\\ q\\ \zeta\end{bmatrix})= \mathcal{F}_{(\v,q,\zeta)}(\u,p,\Temp).
\end{equation*}

Besides, we define the tangent coupled momentum-energy variational trilinear form as such for every trial $(\v,q,\zeta)\in {\bf U}\times{\bf P}\times{\bf T}$ we have
\begin{equation}
\begin{array}{lll}
 \mathcal{DF}_{(\v,q,\zeta)}(\u,p,\Temp):{\bf U}\times{\bf P}\times{\bf T}\longrightarrow\mathbb{R}\\
\mathcal{DF}_{(\v,q,\zeta)} (\u,p,\Temp)(\delta \u,\delta p,\delta \Temp)&:=& \displaystyle\left( \left(\delta\u\cdot\nabla\u\right) , \v \right) 
 +\displaystyle\left( \left(\u\cdot\nabla\delta\u\right) , \v \right)  \\
&&+\displaystyle\left( \pi^{m}_{2} \munf(\Temp,\phi) \left(\nabla\delta\u+(\nabla\delta\u)^t\right) , \nabla \v \right)\\
&&- \displaystyle\left( \pi^{m}_{1} \nabla \delta p ,\v \right) -\left(  \pi^{m}_{1} \delta\u,\nabla q \right)\\
&&  -\displaystyle\left( \pi^{m}_{3} \delta\Temp (\u\cdot \bold{e}_{z}) , \v\right) 
    -\displaystyle\left( \pi^{m}_{3} \Temp (\delta\u\cdot \bold{e}_{z}) , \v\right) 
+ \varepsilon\left(  \delta p , q\right)\\
&& + \displaystyle\left(  \left(\delta\u\cdot\nabla\Temp\right),\zeta \right)
+ \displaystyle\left(  \left(\u\cdot\nabla\delta\Temp\right),\zeta \right)\\
&&+\displaystyle\left( \pi^\text{e}_{1}(\phi)\left(\knf(\Temp,\phi)\nabla\delta \Temp\right) , \nabla \zeta  \right) \\
&&+\displaystyle\left( \pi^\text{e}_{2}(\phi) {D}_{\omega}\left(\nabla\phi\cdot\nabla\delta \Temp\right), \zeta \right) 
+\displaystyle\left(\pi^\text{e}_{3}(\phi) 2D_{\Temp}  \left(\nabla\Temp\cdot \nabla\delta \Temp\right),\zeta \right) ,
\end{array}
\end{equation}
and we define the linear operator $\mathbf{DF} (\u,p,\Temp)$ that satisfies the following equation
$$
(\mathbf{DF} (\u,p,\Temp)(\delta \u,\delta p,\delta \Temp),
\begin{bmatrix}\v\\ q\\ \zeta\end{bmatrix}) = \mathcal{DF}_{(\v,q,\zeta)} (\u,p,\Temp)(\delta \u,\delta p,\delta \Temp).
$$
$\forall (\v,q,\zeta)\in {\bf U}\times{\bf P}\times{\bf T}$.

 \subsection{The nanoparticles transport equation: variational formulation}
 The variational formulation for the nanoparticle concentration transport equation reads: For avery $\varphi\in X:=H^{1}(\Om)$ find $\phi\in H^{1}(\Om)$ such that  
\begin{equation}
\begin{array}{lll}
a(\phi,\varphi)&:=&\displaystyle\left(\left( \u\cdot\nabla \phi\right), \varphi \right) 
+ \displaystyle\left( \pi^\text{p}_{1} \left(  {D}_{\omega}\nabla \phi \right) , \nabla\varphi \right) \\
&&- \displaystyle\left( \pi^\text{p}_{1}{D}_{\omega} \dfrac{\partial \phi}{\partial\bf n} \right)_{\Gamma}
     - \displaystyle\left( \pi^\text{p}_{2}{D}_{\Temp} \dfrac{\partial \Temp}{\dn} \right)_{\Gamma} =0.\\
(b(\Temp),\varphi) &=& \pi^\text{p}_{2}\left( {D}_{\Temp}\nabla\Temp ,\nabla\varphi \right).
\end{array}
\end{equation}
After applying the boundary conditions. This problem is not well posed. Indeed, it does not satisfy the inf-sup condition. We shall discuss this concern in the sequel where we introduce a regularization parameter $\varepsilon$ acting as a reaction term. In the numerical code this parameter is taken very small ($\varepsilon\approx 1.e-10$) to ensure satisfaction of the inf-sup condition and not to harm the model, as the nanoparticles concentration does not exhibit a reaction within the mixture.   

\section{Finite element algorithm} \label{FEM}
For the steady states under consideration we use i) standard Taylor-Hood finite element approximation \cite{taylor1973numerical} for the space discretization of the Navier-Stokes equations, approximating the velocity field with \textbf{P2} finite elements, and the pressure with the P1 finite element.
We assume we have the triangulation $\mathcal{T}_{h}$ of the computational domain $\Om$, such that 
$$\Om=\cup_{n=e}^{nel}\overline{\Om^{e}}$$
we seek for approximated solution over the finite dimensional vector spaces $\U_{h}\times\P_{h}\times \T_{h} \subset \U\times\P\times\T$, where $h$ denotes the discretization parameter. 
We denote by $(\uh,p_{h},\Temph)$ (respectively $(\vh,q_{h},\zeta_{h})$) the discrete FE solution approximating the continuous solution $(\u,p,\Temp)$ (respectively $(\v,q,\zeta)$). We also denote by $\phih$ the FE approximation of $\phi$.

\subsection{Matrix assembly for Newton's method}
Hereafter, we associate to the linear operator $\mathbf{DF} (\u,p,\Temp)(\cdot,\cdot ,\cdot)$ a matrix representation as follows:

\begin{equation}\label{dfsystem}
    \begin{bmatrix}
    DF_{h}^{\u,\u}   & DF_{h}^{\u,p}   & DF_{h}^{\u,\Temp}\\
    DF_{h}^{p,\u}   & DF_{h}^{p,p}   & 0 \\
    DF_{h}^{\Temp,\u}& 0& DF_{h}^{\Temp,\Temp} 
    \end{bmatrix}
    \begin{bmatrix}
    \delta\uh \\ \delta p_{h} \\ \delta \Temph
    \end{bmatrix},
\end{equation}
where the blocks in this matrix are linear operators associated to the following bilinear forms as follows   
\begin{eqnarray*}
( DF_{h}^{\u,\u} \delta\uh,\vh ) &=& \left( \left(\delta\uh ,\nabla\uh\right) , \vh \right)
 +( \left(\uh,\nabla\delta\uh\right), \vh ) + ( \pi^{m}_{2} \munf(\Temph,\phih)\left(\nabla\delta\uh+(\nabla\delta\uh)^t\right) , \nabla \vh ) \\
( DF_{h}^{\u,p} \delta p_{h},\vh ) &=&  ( \pi^{m}_{1} \nabla \delta p , \vh)\\
( DF_{h}^{\u,\Temp} \delta\Temp ,\vh )&=& -( \pi^{m}_{3} \delta\Temp (\uh , \bold{e}_{z}),\vh) \\
( DF_{h}^{p,\u} \delta\uh,q_{h} ) &=& -\left(  \pi^{m}_{1} \delta\uh,\nabla q_{h} \right)\\
( DF_{h}^{p,p} p_{h},q_{h}) &=&  \varepsilon(  \delta p_{h} , q_{h})\\
( DF_{h}^{\Temp,\u} \delta\Temph, \zeta_{h} )&=& (  (\uh,\nabla\delta\Temph) , \zeta_{h})\\
( DF_{h}^{\Temp,\Temp} \delta\Temph,\zeta_{h} )&=& 
\displaystyle( \pi^\text{e}_{1}(\phi)(\knf(\Temph,\phih)\nabla\delta \Temph) , \nabla \zeta_{h}  )\\
&&+( \pi^\text{e}_{2}(\phih) {D}_{\omega}\left(\nabla\phih,\nabla\delta \Temph\right) \zeta_{h})
        +(\pi^\text{e}_{3}(\phi) 2D_{\Temph}  \left(\nabla\Temph, \nabla\delta \Temph\right),\zeta_{h}).
\end{eqnarray*}
Newton's iterations updates the FE solution $(\uh,p_{h},\Temph)$ as follows 

\begin{equation}\label{MomentumEnergyUpdate}
    \begin{bmatrix}
    \uh^{k+1} \\ p_{h}^{k+1} \\ \Temph^{k+1}
    \end{bmatrix}
    =
        \begin{bmatrix}
    \uh^{k} \\ p_{h}^{k} \\ \Temph^{k}
    \end{bmatrix} - 
    \begin{bmatrix}\delta\u^{k} \\ \delta p^{k} \\ \delta \Temp^{k}\end{bmatrix}
\end{equation}
where $(\delta\uh^{k},\delta p_{h}^{k},\delta\Temph^{k})$ is solution to 
\begin{equation}\label{MomentumEnergydeltaUpdate}
        \begin{bmatrix}
    DF_{h}^{\u,\u}   & DF_{h}^{\u,p}   & DF_{h}^{\u,\Temp}\\
    DF_{h}^{p,\u}   & DF_{h}^{p,p}   & 0 \\
    DF_{h}^{\Temp,\u}& 0& DF_{h}^{\Temp,\Temp} 
    \end{bmatrix}
    \begin{bmatrix}\delta\uh^{k} \\ \delta p_{h}^{k} \\ \delta \Temph^{k}\end{bmatrix}
    = 
    \begin{bmatrix}
    F_{h}^{\u\u}   & F_{h}^{\u p}   & F_{h}^{\u\Temp}\\
    F_{h}^{p\u}   & F_{h}^{pp}   & 0 \\
    F_{h}^{\Temp\u}& 0& F_{h}^{\Temp\Temp}
    \end{bmatrix}
    \begin{bmatrix}\uh^{k} \\ p_{h}^{k} \\ \Temph^{k}\end{bmatrix}
\end{equation}
with right-hand side (presented as a matrix-vector product) contrains blocks of linear operators associated to the following bilinear forms as such
\begin{eqnarray*}
( F_{h}^{\u\u} \uh,\vh)&=& \left( \left(\uh\cdot\nabla\uh\right), \vh \right)
 +\displaystyle\left( \pi^{m}_{2} \munf(\Temph,\phih) \left(\nabla\uh+(\nabla\uh)^t\right) , \nabla \vh \right)\\
( F_{h}^{\u p} p_{h},\vh)e &=& \left( \pi^{m}_{1} \nabla p_{h},\vh \right)\\
( F_{h}^{\u\Temp} \Temph,\vh) &=& -\left( \pi^{m}_{3} \Temph (\uh, \bold{e}_{z}) , \vh\right)\\
( F_{h}^{p\u} \uh , q_{h}) &=& -\left(  \pi^{m}_{1} \uh,\nabla q_{h} \right)\\
( F_{h}^{pp} p_{h},q_{h}) &=& \varepsilon\left(  p_{h} , q_{h} \right)\\
( F_{h}^{\Temp\u} \uh,\zeta_{h}) &=& \left(  \left(\uh\cdot\nabla\Temph\right), \zeta_{h} \right)\\
( F_{h}^{\Temp\Temp} \Temph,\zeta_{h}) &=& \displaystyle\left( \pi^\text{e}_{1}(\phi)\left(\knf(\Temph,\phih)\nabla \Temph\right) , \nabla \zeta_{h}  \right) 
+\displaystyle\left( \pi^\text{e}_{2}(\phih) {D}_{\omega}\left(\nabla\phih,\nabla \Temph\right), \zeta_{h}\right)\\
&&	 +\displaystyle\left(\pi^\text{e}_{3}(\phih) D_{\Temph}  \left(\nabla\Temph, \nabla \Temph\right),\zeta_{h}\right).
\end{eqnarray*}

\subsection{Ill-posedness of the Buongiorno nanoparticle transport model with FEM}
We develop hereafter a numerical scheme that approximates the solution of the transport dominated advection-diffusion nanoparticles concentration equation. Despite the fact that finite element solution is very adaptive for this kind of governing equations, such method and other Galerkin based approaches may potentially fail where their relative discrete solution will be market by spurious oscillations in space. This happens if their element P{\'e}clet number goes beyond certain critical value. Other discretization methods suffer, actually, from this issue, for instance the finite difference technique. In the later, the spatial oscillation could be reduced by upwinding scheme. For the finite element method, we have equivalent methods to the later upwinding scheme, such as Petrov-Galerkin and Streamline-Upwind Petrov-Galerkin (SUPG)~\cite{franca2004stabilized,JOHN2013289,ERATH2019308}. Here, the shape function is modified in order to mimic the upwinding effect and therefore reduce, or eliminate, the spatial oscillations. An other interesting stabilization method is the Galerkin/Least-Square (GLS) see for instance \cite{TENEIKELDER20181135,TENEIKELDER2018259} and references therein. All the above and other techniques \cite{BURMAN20101114,BOCHEV20042301,RUSSO20061608} more adaptable for the time-dependent equations, use artificial viscosity in the direction of the streamlines.

Let us define the space 
$$X_{h}=\left\{ \varphih\in \mathcal{C}^0(\overline{\Om}); \,\forall \Om^{e}\in\mathcal{T}_{h},\, \varphi_{h|\Om^{}e}\in \mathbf{P}_2\right\}\subset X,$$
Over which, the bilinear form related to the variational formulation of the advection-diffusion nanoparticle concentration writes as follows: for any trial function $\varphih\in X_{h}$ find $\phih\in X_{h}$
\begin{equation}\label{a_bilinearform}
    \aeps(\phih,\varphih) = b(\Temph)
\end{equation}
where 
\begin{eqnarray*}
\aeps(\phih,\varphih)&=& \left(\left( \u\cdot\nabla \phih^{}\right) , \varphih \right) 
+ \pi^\text{p}_{1}\left(  {D}_{\omega}^{}\nabla \phih^{} , \nabla\varphih \right) + \varepsilon\left(  \phih , \varphih \right).\\
(b(\Temph),\varphih)  &=& \pi^\text{p}_{2}\left( {D}_{\Temph}\nabla\Temph ,\nabla\varphih \right)
\end{eqnarray*}
We endow $X$ with the following norm 
  $$\|\varphi\|_{\varepsilon}:= \sqrt{\pi_{1}^{p} \|\nabla\varphi\|_{2}^2+\varepsilon \|\varphi\|_{2}^{2}},$$
 which we will use in the sequel. 
\begin{proposition}\label{propositionWellPosedness}(Well-posedness)
The bilinear form $\aeps(\cdot,\cdot)$  is coercive such that 
\begin{eqnarray*}
\aeps(\varphih,\varphih) &\geq&  \|\varphih\|_{\varepsilon}^{2},\\
\aeps(\varphih,\varphih) &\geq& \alpha_{p} \|\varphih\|_{2}^{2}.
\end{eqnarray*}
\end{proposition}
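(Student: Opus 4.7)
The plan is to test the bilinear form $a^\varepsilon$ on the diagonal $\phi_h=\varphi_h$ and show that its three constituent pieces behave as follows: the convective term vanishes, the diffusive term controls $\pi_1^p\|\nabla\varphi_h\|_2^2$, and the reaction term contributes $\varepsilon\|\varphi_h\|_2^2$. Summing these gives the first inequality, and the second follows by discarding the gradient part.

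First I would treat the convective term $((\mathbf{u}\cdot\nabla\varphi_h),\varphi_h)$. Writing $\mathbf{u}\cdot\nabla\varphi_h\cdot\varphi_h=\tfrac{1}{2}\mathbf{u}\cdot\nabla(\varphi_h^2)$ and integrating by parts yields
\begin{equation*}
((\mathbf{u}\cdot\nabla\varphi_h),\varphi_h)=-\tfrac{1}{2}\int_{\Omega}(\nabla\cdot\mathbf{u})\,\varphi_h^2\,d\omega+\tfrac{1}{2}\int_{\partial\Omega}(\mathbf{u}\cdot\mathbf{n})\,\varphi_h^2\,d\sigma.
\end{equation*}
The volume term is zero by the incompressibility constraint (first line of \eqref{Dimensionless-model}), and the boundary term is zero by the no-slip condition $\mathbf{u}=0$ on $\partial\Omega$. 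Hence this contribution drops out even without any regularity beyond what $\mathbf{U}\times\mathbf{P}\times\mathbf{T}$ already provides.

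Next I would handle the diffusive term. From the non-dimensionalization \eqref{DefNonDimensional}, $D_\omega=D^\star_\omega(\theta^\star)/D^\star_\omega(\theta^\star_C)$, and the Einstein–Stokes expression \eqref{BrownianDb} together with $\theta^\star\ge \theta^\star_C$ (since the fluid sits between the cold and hot walls) and the monotonicity of $\mu_{bf}$ in \eqref{muf} give the pointwise lower bound $D_\omega\ge 1$. Therefore
\begin{equation*}
\pi_1^p\bigl(D_\omega\nabla\varphi_h,\nabla\varphi_h\bigr)\ \ge\ \pi_1^p\,\|\nabla\varphi_h\|_2^{\,2}.
\end{equation*}
Combining this with the trivial identity $\varepsilon(\varphi_h,\varphi_h)=\varepsilon\|\varphi_h\|_2^{\,2}$ and the vanishing of the convective term, I obtain
\begin{equation*}
a^\varepsilon(\varphi_h,\varphi_h)\ \ge\ \pi_1^p\,\|\nabla\varphi_h\|_2^{\,2}+\varepsilon\,\|\varphi_h\|_2^{\,2}\ =\ \|\varphi_h\|_\varepsilon^{\,2},
\end{equation*}
which is the first stated inequality. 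The second follows at once by dropping the gradient contribution: $a^\varepsilon(\varphi_h,\varphi_h)\ge\varepsilon\|\varphi_h\|_2^{\,2}$, so one may take $\alpha_p=\varepsilon$.

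The only delicate step is the lower bound $D_\omega\ge 1$: the claim $\|\varphi_h\|_\varepsilon^2$ with the constant $\pi_1^p$ appearing with no absorbed factor presupposes exactly this pointwise inequality, which is why the paper normalizes the Brownian coefficient at $\theta^\star_C$ in \eqref{DefNonDimensional}. If one prefers a dimensionally agnostic version, the same argument yields a coercivity constant $\min(1,D_\omega^{\min})$ in front of $\pi_1^p\|\nabla\varphi_h\|_2^{\,2}$. Everything else is a direct consequence of incompressibility and the no-slip boundary condition.
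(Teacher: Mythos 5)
Your proof of the first inequality follows essentially the paper's own route: the convective term is annihilated by integration by parts together with incompressibility and the no-slip condition, the diffusive term is bounded below using $D_{\omega}\geq 1$ (the normalization at $\TempdC$ in \eqref{DefNonDimensional} combined with the monotonicity of $\mu_\text{bf}$ and the maximum principle for the temperature), and the $\varepsilon$-term is kept as is; you are in fact more explicit than the paper, which uses $D_{\omega}\geq 1$ silently in the proposition and only mentions it later in the SUPG theorem. Where you genuinely diverge is the second inequality: you discard the gradient contribution and take $\alpha_{p}=\varepsilon$, whereas the paper keeps the gradient term and converts it into an $L^{2}$ bound via a Poincar\'e inequality with constant $c_{\Om}$, arriving at $\alpha_{p}=(\pi^{\text{p}}_{1}+\varepsilon c_{\Om})/c_{\Om}$. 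The trade-off is real: your constant is immediate and unconditional, but it degenerates with $\varepsilon$ (the paper takes $\varepsilon\approx 10^{-10}$ in practice), so the resulting $L^{2}$-coercivity is nearly vacuous in exactly the singular-perturbation regime the scheme operates in, while the paper's constant is uniform in $\varepsilon$, which is what one wants when this proposition is later invoked (e.g.\ for the non-singularity of $\N$ in the constrained system \eqref{LinearSystemFEMnanoparticle}). On the other hand, the paper's step applies Poincar\'e on $X_{h}\subset H^{1}(\Om)$, where $\|\varphih\|_{2}\leq c_{\Om}\|\nabla\varphih\|_{2}$ fails for constant functions; making that step rigorous requires either the mean-value constraint on $\phih$ or retaining the $\varepsilon\|\varphih\|_{2}^{2}$ term, so your weaker bound is actually the safer statement as written, at the price of a much poorer constant.
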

\begin{proof}
We have 
$$\aeps(\varphih,\varphih)= \left(\left( \uh\cdot\nabla \varphih \right),\varphih \right) 
+ \pi^\text{p}_{1}  \left( {D}_{\omega}^{}\nabla \varphih ,\nabla\varphih \right) + \varepsilon\left(  \varphih,\varphih \right)$$
Note that we have $\left(\left( \uh\cdot\nabla \varphih\right) \varphih \right)=\left(\nabla\cdot\left(\uh \varphih^2\right),\frac 1 2\right)=-\frac{1}{2}\left((\uh\cdot\n)\varphih^{2}\right)_{\Gamma}=0$ after using integration by part and the divergence theorem together with the velocity homogeneous boundary condition. We therefore have 
\begin{eqnarray*}
\aeps(\varphih,\varphih) 
&\geq&   \pi_{1}^{p} \|\nabla\varphih\|^{2}_{2}+ \varepsilon \|\varphih\|_{2}^{2} -\frac{1}{2} \left((\uh\cdot\n)\varphih^{2}\right)_{\Gamma} \\
&\geq&  \dfrac{\pi_{1}^{p} +\varepsilon c_{\Om}}{c_{\Om}} \|\varphih\|^{2}_{2}.
\end{eqnarray*}
Where $c_{\Om}$ stands for the Poincar\'e constant. Finally, by setting $\alpha_p=\frac{\pi_{1}^{p} +\varepsilon c_{\Om}}{c_{\Om}}$, we obtain the coercivity result in $X$ and in $L^2(\Omega)$ as stated above. 
\end{proof}
In the subsequent part, we shall discuss the stabilization of the transport equation for the concentration of nanoparticles. Indeed, we use the so-called Streamline Upwind Petrov-Galerkin (SUPG) method that enlarges the standard trial space via streamline trials. In practice, the SUPG formulates as follows. 
Given $\uh\in \U_h$, for a test function $\varphi$ in 
$$\text{span}\left\{\varphih + \sum_{\Om^{e}} \delta_{\Om^{e}} \uh\nabla\varphih\right\}$$
find $\phih$ such that 
\begin{equation}\label{varprbsupg}
\aeps_{\sc supg}(\phih,\varphih)= b_{\sc supg}(\phih,\varphih).
\end{equation}
where 
\begin{eqnarray*}
 \aeps_{\sc supg}(\phih,\varphih)&:=&  \aeps(\phih,\varphih) + \sum_{e}^{nel}\int_{\Om^{e}}\delta_{\Om^{e}}\left( \uh\cdot\nabla \phih +\varepsilon\phih^2 - \pi^\text{p}_{1} \nabla\cdot\left({D}_{\omega}^{}\nabla \phih\right)\right) \left(\u\nabla\varphih\right) \dw 
\end{eqnarray*}
and 
\begin{equation*}
    b_{\sc supg}(\phih,\varphi)=b(\phih,\varphi)+\sum_{e}^{nel}\int_{\Om^{e}}\delta_{\Om^{e}}\pi^\text{p}_{2} \nabla\cdot\left({D}_{\Temph}\nabla\Temph\right) \left(\u\nabla\varphih\right) \dw.
\end{equation*}
Here, the integral contributions stand for the SUPG formulation \cite{brooks1982streamline,franca1992stabilized,gelhard2005stabilized,burman2011analysis,burman2010consistent}, where, $\delta_{\Om^{e}}$ are user-chosen weights element dependent parameters. The above convection dominated nanoparticle equation, even if it is linear with respect to the unknown $\phih$ sill exhibit spurious numerical oscillations due to the fact that the dominant contribution comes from the convective term, which is far away larger than the thermophoresis effect and the diffusive Brownian motion. This directly reflects that the corresponding P{\'e}clet number $\Pe$ (non-dimensional quantity), i.e., the ratio of the convection rate and the diffusion rate is relatively high. To overcome this handicap in the numerical simulation, we use the SUPG method, which provides numerical stability. Indeed, the SUPG adds an artificial stream-diffusion along the streamlines of the particles flow. In practice the SUPG weight function parameter is adjusted in term of local P{\'e}clet number. In theory, the choice of the weighted parameter function must satisfy 
\begin{equation}\label{ubweightedfunction}
\delta_{\Om^{e}} \leq \min\left\{ \dfrac{1}{2\varepsilon} , \dfrac{h_{\Om^{e}}^2\| D_{\omega}\|_{\infty,\Om^{e}}^{2}}{2 C_{inv}^2} \right\}
\end{equation}
a condition for which we ensure the coercivity and hence the well-posedness of the SUPG problem via the next coercivity result in the Banach space $V_{h}$:
$$V_{h}=\left\{ \varphih\in X_{h}(\Om) \, | \, \u\cdot\nabla{\varphih} \in L^{2}(\Om) \right\}$$
endowed with the SUPG-norm defined by: 
\begin{equation}\label{supgnorm}
    \|\varphih\|_{\sc supg} :=\sqrt{\pi^\text{p}_{1}\|\nabla \varphih\|_{2}^{2} + \varepsilon\|\varphih\|_{2}^{2} + 
\|\delta_{\Om}^{\frac{1}{2}} \u\cdot\nabla \varphih\|_{2}^{2}}
\end{equation}

As our numerical approach is based on Newton's method, which increasingly varies the Rayleigh number $\Ra$, our SUPG weighted function is made as function of the local P{\'e}clet number $\Pe_{\text{np},\Om^{e}}$, the nanofluid Rayleigh number $\Ra_\text{nf}$ and the local element size $h_{\Om^{e}}$. Hence the numerically chosen weighted function is  
$$\delta_{\Om^{e}}=h_{\Om^{e}}^2\dfrac{\Ra_\text{nf}}{\Pe_{\text{np},\Om^{e}}}.$$


\begin{theorem}
The bilinear form  $\aeps_{\sc supg}(\cdot,\cdot)$ satisfies the following Lax-Milgram conditions 
\begin{eqnarray}
 \hbox{Coercivity} \quad  & \aeps_{\sc supg}(\varphih,\varphih) \geq \dfrac{1}{2}\|\varphih\|^{2}_{\sc supg}\\
 \hbox{Continuity} \quad & \aeps_{\sc supg}(\varphih,\psi) \leq C\|\varphih\|_{\sc supg} \|\psi\|_{\sc supg}
\end{eqnarray}
for which the problem \eqref{varprbsupg} is well-posed and has unique solution.
\end{theorem}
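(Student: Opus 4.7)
The plan is to prove coercivity by testing $\aeps_{\sc supg}$ with $\phih=\varphih$ and splitting the result into the baseline form $\aeps(\varphih,\varphih)$, already controlled by Proposition~\ref{propositionWellPosedness}, plus the three element-wise pieces coming from the SUPG residual. The convective piece $\sum_{e}\int_{\Om^{e}}\delta_{\Om^{e}}(\u\cdot\nabla\varphih)^{2}\dw$ is already non-negative and directly supplies the streamline component of the SUPG norm~\eqref{supgnorm}. What remain are two indefinite cross terms,
\[
S_{\varepsilon}=\sum_{e}\int_{\Om^{e}}\delta_{\Om^{e}}\,\varepsilon\,\varphih\,(\u\cdot\nabla\varphih)\,\dw,\qquad
S_{D}=-\sum_{e}\int_{\Om^{e}}\delta_{\Om^{e}}\,\pi^{\text{p}}_{1}\,\nabla\!\cdot\!(D_{\omega}\nabla\varphih)\,(\u\cdot\nabla\varphih)\,\dw,
\]
which must be absorbed by a Young--inverse-inequality chain.

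For $S_{\varepsilon}$ I would apply Cauchy--Schwarz and Young's inequality with weight $1/2$: half of the bound goes into the streamline contribution, and the remainder is dominated by $\sum_{e}\delta_{\Om^{e}}\varepsilon^{2}\|\varphih\|_{2,\Om^{e}}^{2}$. The first part of~\eqref{ubweightedfunction}, $\delta_{\Om^{e}}\le\tfrac{1}{2\varepsilon}$, reduces that remainder to at most $\tfrac{\varepsilon}{2}\|\varphih\|_{2}^{2}$, which is absorbed by the $\varepsilon\|\varphih\|_{2}^{2}$ piece produced by Proposition~\ref{propositionWellPosedness}. For $S_{D}$ the Young step again transfers half the weight into the streamline reserve, while the residual second-order term is estimated element-wise via the polynomial inverse inequality $\|\nabla\!\cdot\!(D_{\omega}\nabla\varphih)\|_{2,\Om^{e}}\le C_{inv}h_{\Om^{e}}^{-1}\|D_{\omega}\|_{\infty,\Om^{e}}\|\nabla\varphih\|_{2,\Om^{e}}$, valid on $X_{h}$; the second bound in~\eqref{ubweightedfunction} is calibrated exactly so that the resulting quantity is controlled by $\tfrac{\pi^{\text{p}}_{1}}{2}\|\nabla\varphih\|_{2}^{2}$ and therefore absorbed by the diffusive part of $\aeps(\varphih,\varphih)$. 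Collecting the surviving halves of the three positive contributions yields $\aeps_{\sc supg}(\varphih,\varphih)\ge\tfrac{1}{2}\|\varphih\|_{\sc supg}^{2}$.

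Continuity is a more routine Cauchy--Schwarz argument. Each term of $\aeps(\phih,\varphih)$ is bounded by $\|\phih\|_{\varepsilon}\|\varphih\|_{\varepsilon}$, which is in turn dominated by $\|\phih\|_{\sc supg}\|\varphih\|_{\sc supg}$. For every SUPG residual piece I would pair $\delta_{\Om^{e}}^{1/2}(\u\cdot\nabla\varphih)$ against $\delta_{\Om^{e}}^{1/2}$ times the corresponding residual component of $\phih$, reusing the inverse inequality for the second-order divergence and the bound $\delta_{\Om^{e}}\varepsilon\le 1/2$ for the $\varepsilon\phih$ contribution. Summation over elements gives a constant $C$ depending on $\|D_{\omega}\|_{\infty}$, $\varepsilon$, $\pi^{\text{p}}_{1}$, and $C_{inv}$. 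Lax--Milgram then delivers existence and uniqueness of $\phih\in V_{h}$ solving~\eqref{varprbsupg}.

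The hard part will be $S_{D}$: it couples a second-order differential operator with the first-order streamline direction and is not controllable purely at the continuous variational level. The argument goes through only because $X_{h}$ is a polynomial finite-element subspace on which an inverse inequality is available, and because the element-wise weight restriction~\eqref{ubweightedfunction} is sharply matched to the inverse constant $C_{inv}$. Tracking the constants through the Young--inverse chain so that the two halves of the streamline reserve absorb $S_{\varepsilon}$ and $S_{D}$ simultaneously, without exhausting the $\pi^{\text{p}}_{1}\|\nabla\varphih\|_{2}^{2}$ and $\varepsilon\|\varphih\|_{2}^{2}$ buffers, is the main bookkeeping obstacle of the proof.
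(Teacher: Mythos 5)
Your proposal is correct and follows essentially the same route as the paper's proof: the baseline coercivity from Proposition~\ref{propositionWellPosedness} plus the non-negative streamline term, with the $\varepsilon$- and second-order residual cross terms absorbed via Cauchy--Schwarz, the element-wise inverse inequality $\|\nabla\cdot(D_\omega\nabla\varphih)\|_{2,\Om^{e}}\lesssim C_{inv}h_{\Om^{e}}^{-1}\|\nabla\varphih\|_{2,\Om^{e}}$, the weight restriction \eqref{ubweightedfunction} and Young's inequality, yielding the factor $\tfrac12$, and continuity plus Lax--Milgram handled exactly as in the paper. The only cosmetic difference is that you absorb the two cross terms separately (each taking a share of the streamline reserve) while the paper bounds them jointly with a single Young parameter $\xi=\tfrac12$, which changes nothing essential.
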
    
\begin{proof}
The proof follows the classical bounds estimates for SUPG analysis.
\begin{eqnarray}
\aeps_{\sc supg}(\varphih,\varphih) &\geq&  
 \left(\left( \u\cdot\nabla \varphih\right), \varphih \right) 
+ \pi^\text{p}_{1}\left( {D}_{\omega}\nabla \varphih , \nabla\varphih \right) + \varepsilon\left(\varphih,\varphih\right) \notag
\\&&+\sum_{e}^{nel} 
\int_{\Om^{e}}\delta_{\Om^{e}}\left( \u\cdot\nabla \varphih\right) \left(\u\nabla\varphih\right) \dw \notag\\
&& - \sum_{e}^{nel} \left|\int_{\Om^{e}} \left(\varepsilon\varphih^2 - \pi^\text{p}_{1} \nabla\cdot\left({D}_{\omega}\nabla \varphih\right)\right) \left(\delta_{\Om^{e}}^{\frac{1}{2}}\u\nabla\varphih\right) \dw \right|\\
 &\geq&  
  \pi^\text{p}_{1}\|\nabla \varphih\|_{2}^{2} + \varepsilon\|\varphih\|_{2}^{2} + 
\|\delta_{\Om}^{\frac{1}{2}} \u\cdot\nabla \varphih\|_{2}^{2} \notag\\
&& - \sum_{e}^{nel} \left|\int_{\Om^{e}} \left(\varepsilon\varphih^2 - \pi^\text{p}_{1} \nabla\cdot\left({D}_{\omega}\nabla \varphih\right)\right) \left(\delta_{\Om^{e}}^{\frac{1}{2}}\u\nabla\varphih\right) \dw \right|\\
 &\geq&  
  \|\varphih\|_{\sc supg}^{2} - \sum_{e}^{nel} \left|\int_{\Om^{e}} \left(\varepsilon\varphih^2 - \pi^\text{p}_{1} \nabla\cdot\left({D}_{\omega}\nabla \varphih\right)\right) \left(\delta_{\Om^{e}}^{\frac{1}{2}}\u\nabla\varphih\right) \dw \right|
\label{lowerbound2}
\end{eqnarray}
where we have use the fact that $D_{\omega}>1$ almost every where as per its definition \eqref{DefNonDimensional} and \eqref{BrownianDb} together with the maximum principle for the (heat) energy equation.

On the other hand, we have the following inequality 
\begin{eqnarray}
&&\left|\int_{\Om^{e}} \delta_{\Om^{e}}^{\frac{1}{2}}\left(\varepsilon\varphih^2 - \pi^\text{p}_{1} \nabla\cdot\left({D}_{\omega}\nabla \varphih\right)\right) \left(\delta_{\Om^{e}}^{\frac{1}{2}}\u\nabla\varphih\right) \dw \right|\notag\\
&\leq& \int_{\Om^{e}}\left| \delta_{\Om^{e}}^{\frac{1}{2}}\left(\varepsilon\varphih^2 - \pi^\text{p}_{1} \nabla\cdot\left({D}_{\omega}\nabla \varphih\right)\right) \left(\delta_{\Om^{e}}^{\frac{1}{2}}\u\nabla\varphih\right) \right| \dw \notag\\
&\leq& \left(
\varepsilon\delta_{\Om^{e}}^{\frac{1}{2}} \|\varphih\|_{2,\Om^{e}} + \pi^\text{p}_{1}\delta_{\Om^{e}}^{\frac{1}{2}}\|D_{\omega}\|_{\infty,\Om^{e}}\| \Delta\varphih\|_{2,\Om^{e}} \right) \|\delta_{\Om^{e}}^{\frac{1}{2}}\u\nabla\varphih\|_{2,\Om^{e}}\label{hucs1}\\
&\leq& \left(
\varepsilon\delta_{\Om^{e}}^{\frac{1}{2}} \|\varphih\|_{2,\Om^{e}} + \delta_{\Om^{e}}^{\frac{1}{2}} \dfrac{\pi^\text{p}_{1}\|D_{\omega}\|_{\infty,\Om^{e}} C_{inv}}{h_{\Om^{e}}} \| \nabla\varphih\|_{2,\Om^{e}} \right) \|\delta_{\Om^{e}}^{\frac{1}{2}}\u\nabla\varphih\|_{2,\Om^{e}}\label{young}\\
&\leq& \left(
\sqrt{\dfrac{\varepsilon}{2}} \|\varphih\|_{2,\Om^{e}} +   \sqrt{\dfrac{\pi^\text{p}_{1}}{2}} 
\| \nabla\varphih\|_{2,\Om^{e}} \right) \|\delta_{\Om^{e}}^{\frac{1}{2}}\u\nabla\varphih\|_{2,\Om^{e}}\label{upboundmin}\\
&\leq& \dfrac{\varepsilon}{4\xi} \|\varphih\|_{2,\Om^{e}}^{2} 
+ \dfrac{\pi^\text{p}_{1}\|D_{\omega}\|_{\infty,\Om^{e}}}{4\xi} 
\| \nabla\varphih\|_{2,\Om^{e}}^{2} + 
\xi\|\delta_{\Om^{e}}^{\frac{1}{2}}\u\nabla\varphih\|_{2,\Om^{e}}^{2}
\end{eqnarray}
Having used Cauchy-Schwarz for the inequality \eqref{hucs1}, then the inverse inequality (see \cite{10.2307/23074327})  
$$\|\Delta\varphih\|_{2,\Om^{e}} \leq \dfrac{C_{inv}}{h_{\Om^{e}}}\|\nabla\varphih\|_{2,\Om^{e}}$$ 
in the argument for \eqref{young}, we also refer to \cite{ciarlet2002finite} and \cite{ern2013theory} for further details on inverse inequalities. Then we used 
\eqref{ubweightedfunction} in \eqref{upboundmin}, and young's product inequality in \eqref{young}. Thus by chosing $\xi=\frac{1}{2}$ we obtain
\begin{equation}\label{upper2}
\left|\int_{\Om^{e}} \left(\varepsilon\varphih^2 - \pi^\text{p}_{1} \nabla\cdot\left({D}_{\omega}\nabla \varphih\right)\right) \left(\delta_{\Om^{e}}^{\frac{1}{2}}\u\nabla\varphih\right) \dw \right|
    \leq
\dfrac{\varepsilon}{2} \|\varphih\|_{2,\Om^{e}}^{2} 
+ \dfrac{\pi^\text{p}_{1}\|D_{\omega}\|_{\infty,\Om^{e}}}{2} 
\| \nabla\varphih\|_{2,\Om^{e}}^{2} + 
\dfrac{1}{2}\|\delta_{\Om^{e}}^{\frac{1}{2}}\u\nabla\varphih\|_{2,\Om^{e}}^{2}
\end{equation}
which we combine together with \eqref{lowerbound2} to obtain 
\begin{eqnarray}
\aeps_{\sc supg}(\varphih,\varphih) &\geq&  \|\varphih\|_{\sc supg}^{2} -\dfrac{1}{2}\sum_{e}^{nel} \left(
\varepsilon \|\varphih\|_{2,\Om^{e}}^{2} 
+ \left(\pi^\text{p}_{1}\|D_{\omega}\|_{\infty,\Om^{e}}\right) 
\| \nabla\varphih\|_{2,\Om^{e}}^{2} + 
\|\delta_{\Om^{e}}^{\frac{1}{2}}\u\nabla\varphih\|_{2,\Om^{e}}^{2}
\right)
\notag\\
&=& \dfrac{1}{2} \|\varphih\|_{\sc supg}^{2} 
\end{eqnarray}
Besides, for the boundedness of the binilinear form we have  
\begin{eqnarray}
\aeps_{\sc supg}(\varphih,\psih) &=&  
 \left(\left( \u\cdot\nabla \varphih\right), \psih \right) 
+ \pi^\text{p}_{1}\left( {D}_{\omega}\nabla \varphih , \nabla\psih \right) + \varepsilon\left(\varphih,\psih\right) \notag
\\&&+\sum_{e}^{nel} 
\int_{\Om^{e}}\delta_{\Om^{e}}\left( \u\cdot\nabla \varphih\right) \left(\u\nabla\psih\right) \dw \notag\\
&& + \sum_{e}^{nel} \int_{\Om^{e}} \left(\varepsilon\varphih\psih + \pi^\text{p}_{1} \nabla\cdot\left({D}_{\omega}\nabla \varphih\right)\right) \left(\delta_{\Om^{e}}^{\frac{1}{2}}\u\nabla\psih\right) \dw \notag\\
&\leq& \left(c_{\Om}\|\u\|_{2} + \pi^\text{p}_{1} \|{D}_{\omega}\|_{\infty} + \varepsilon c_{\Om}^{2} \right) \|\nabla\varphih\|_{2}\|\nabla\psih\|_{2} \notag\\
&&+ \sum_{e}^{nel} \left\|\delta_{\Om^{e}}^{\frac{1}{2}}\u\nabla\varphih \right\|_{2,\Om^{e}} \left\|\delta_{\Om^{e}}^{\frac{1}{2}}\u\nabla\psih \right\|_{2,\Om^{e}} \notag\\
&&+\frac{\varepsilon c_{\Om}^{2}}{2} \|\nabla\varphih\|_{2,\Om^{e}}^{2}\|\nabla\psih\|_{2,\Om^{e}}^{2} + \dfrac{\pi^\text{p}_{1}\|D_{\omega}\|_{\infty,\Om^{e}}}{2} \|\nabla\varphih\|_{2,\Om^{e}}^{2}\|\nabla\psih\|_{2,\Om^{e}}^{2} + \dfrac{1}{2}\|\delta_{\Om^{e}}^{\frac{1}{2}}\u\nabla\varphih\|_{2,\Om^{e}}^{2}\notag\\
&\leq& C \|\nabla\varphih\|_{2}\|\nabla\psih\|_{2} 
\end{eqnarray}
where $C$ is function of $C\left(c_{\Om},\varepsilon,\|{D}_{\omega}\|_{\infty},\|\u\|_{2}\right)$. This completes the proof.
\end{proof}    
Note here that the SUPG method demonstrates a stronger stability property in the streamline direction than the standard Galerkin discretization.

Furthermore, the nanoparticle concentration mean value has to be equal to one (as per the dimensionless formulation of the equations). This reflects the conservation of the bulk amount of concentration in the enclosure. This nanoparticle constraint hence writes as
\begin{equation}\label{constraintphi}\int_{\Om}\phi(\omega) \dw=1.\end{equation}
In this light, the handled problem is treated as a variational minimization problem subject to a constraint in the state variable $\phih$ and writes as follows
\begin{equation*}
    \min_{\begin{array}{cc}
         \phih \in H^{1}_{0}(\Om),\\ \int_{\Om}\phih(\omega) \dw=1
    \end{array}}\hspace{-0.6cm}
    \mathcal{J}(\phih,\lambda):=\dfrac{1}{2} \aeps(\phih,\phih) + \lambda \int_{\Om}\phih(\omega) \dw
\end{equation*}
where the constraint \eqref{constraintphi} is token into consideration using the Lagrange multiplier $\lambda$ in the above augmented cost functional. One can easily retrieve the equations that need to be solved for the nanoparticle concentrations as a critical point of $\mathcal{J}(\cdot,\cdot)$, i.e., by deriving with respect to $\phih$ and $\lambda$.

In practice, we assemble the finite element matrix system for the nanoparticle transport equation as follows: 
\begin{equation}\label{LinearSystemFEMnanoparticle}
\begin{bmatrix}
\N  & z\vspace{.05in}\\z^{T} &  0
\end{bmatrix}
\begin{bmatrix}
\phih\\ \lambda
\end{bmatrix}
=\begin{bmatrix}b \\ 1\end{bmatrix}
\end{equation}
where 
\begin{eqnarray*}
\left(\N\right)_{i,j} &=& 
\int_{\Om}\left( \u\cdot\nabla \varphih^{i}\right) \varphih^{j} \dw 
+ \int_{\Om} \pi^\text{p}_{1} D_w\nabla \varphih^{i} \cdot\nabla\varphih^{j} \dw +\int_{\Om} \varepsilon \varphih^{i}\varphih^{j} \dw\\
&&+\sum_{e}^{nel}\int_{\Om^{e}}\delta_{\Om^{e}}\left( \u\cdot\nabla \varphih^{i}\right) \left(\u\nabla\varphih^{j}\right) \dw \\
&&- \sum_{e}^{nel}\int_{\Om^{e}} \delta_{\Om^{e}}\pi^\text{p}_{1} \nabla\cdot\left({D}_{\omega}^{}\nabla \varphih^{i}\right) \left(\u\nabla\varphih^{j}\right) \dw \\
\left(z\right)_{j}&=&  \int_{\Om} 1\cdot\varphih^{j} \dw\\
\left(b\right)_{j}  &=&-\int_{\Om}\pi^\text{p}_{2} {D}_{\TempdH}\nabla\Temph \cdot\nabla\varphih^{j} \dw\\
&&+\sum_{e}^{nel}\int_{\Om^{e}}\delta_{\Om^{e}}\pi^\text{p}_{2} \nabla\cdot\left({D}_{\Temph}\nabla\Temph\right) \left(\u\nabla\varphih^{j}\right) \dw.
\end{eqnarray*}
One may ask whether this augmented matrix \eqref{LinearSystemFEMnanoparticle} is invertible or not. Indeed, it is invertible as per the following proposition 
 \begin{proposition}
 The discrete linear system \eqref{LinearSystemFEMnanoparticle} describing the nanoparticle transport in the enclosure (under the mean value constraint) is consistent. 
 \end{proposition}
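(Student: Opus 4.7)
The plan is to recognize that the block matrix in \eqref{LinearSystemFEMnanoparticle} is a saddle-point system of the form $\begin{bmatrix}\N & z\\ z^T & 0\end{bmatrix}$, so its consistency reduces to a Schur-complement argument once the coercivity of $\N$ and the non-vanishing of $z$ are established. Consequently I would first invoke the previous theorem: the matrix $\N$ is the finite element realization of the bilinear form $\aeps_{\sc supg}(\cdot,\cdot)$ on $V_h$, which was just proven to be coercive with constant $1/2$ in the $\|\cdot\|_{\sc supg}$ norm. In matrix form this means $\xi^T \N \xi \geq \tfrac{1}{2}\|\varphih\|_{\sc supg}^2>0$ for every nonzero coefficient vector $\xi$ (where $\varphih=\sum_j \xi_j \varphih^j$). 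Hence $\N$ is invertible, even though it is non-symmetric because of the convective contribution.

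Next I would check that the constraint vector $z$, whose entries are $(z)_j=\int_{\Om}\varphih^j\dw$, is nonzero. This is immediate: since $X_h\supset\mathbf{P}_2$ contains the constant function $\mathbf{1}=\sum_j \varphih^j$ (partition of unity for a Lagrangian $\mathbf{P}_2$ basis), one has $\sum_j (z)_j = \int_\Om 1\,\dw =|\Om|\neq 0$, so at least one entry of $z$ is nonzero and in fact $\mathbf{1}^T z =|\Om|$.

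With $\N$ invertible and $z\neq 0$, I would eliminate $\phih$ through the Schur complement. From the first block row, $\phih=\N^{-1}(b-\lambda z)$; substituting into the second, the multiplier satisfies the scalar equation
\begin{equation*}
\bigl(z^T\N^{-1} z\bigr)\lambda = z^T\N^{-1} b - 1.
\end{equation*}
The key algebraic fact is $z^T\N^{-1}z>0$: indeed, writing $y=\N^{-1}z$, one has $z^T\N^{-1}z = y^T \N y > 0$ because $\N$ is positive-definite in the non-symmetric sense and $y\neq 0$ (as $z\neq 0$ and $\N$ is invertible). Hence $\lambda$ is uniquely determined and then so is $\phih$, which proves both existence and uniqueness and, a fortiori, consistency of \eqref{LinearSystemFEMnanoparticle}.

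The only subtle point, and what I expect to be the main obstacle, is the positivity of the scalar Schur complement $z^T\N^{-1}z$ when $\N$ is non-symmetric: one cannot appeal directly to a Cholesky-type factorization. The clean way around this is the identity $y^T\N^{-1}y = x^T\N x$ with $x=\N^{-1}y$, which reduces positivity of $\N^{-1}$ to the already-proved coercivity of $\N$. Everything else (invertibility of $\N$, non-vanishing of $z$) is a direct translation of the SUPG theorem and basic properties of the Lagrangian basis, so no further technical machinery is needed.
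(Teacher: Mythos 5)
Your argument is correct, and it turns on the same pivot as the paper's proof: invertibility of $\N$ from coercivity together with the non-vanishing of the scalar $z^{T}\N^{-1}z$. The presentation, however, differs in a way worth noting. The paper proceeds by contradiction on the column space of the augmented matrix and, upon reaching $z^{T}\N^{-1}z=0$, declares that this ``contradicts $\N$ is non-singular''; strictly speaking, mere non-singularity of a non-symmetric matrix does not exclude $z^{T}\N^{-1}z=0$, so that step implicitly uses the positive-definiteness coming from coercivity. Your Schur-complement elimination makes this explicit and repairs the weak link: with $y=\N^{-1}z$ one has $z^{T}\N^{-1}z=y^{T}\N y>0$ by the coercivity of the SUPG-stabilized bilinear form, which is also the more appropriate result to invoke since $\N$ contains the streamline terms (the paper cites Proposition~1, stated for the unstabilized form $\aeps$). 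You additionally verify $z\neq 0$ through the partition of unity, $\mathbf{1}^{T}z=|\Om|$, a point the paper takes for granted; its assertion that $z$ has positive entries is in fact not literally true for $\mathbf{P}_2$ Lagrange vertex basis functions, whose elementwise integrals vanish, although the nonzero edge contributions keep the argument intact. In short, your route yields a full well-posedness statement (existence and uniqueness of $(\phih,\lambda)$ via the scalar Schur equation), whereas the paper's column-space contradiction establishes consistency while leaning on an unstated positivity property.
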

 
 \begin{proof}
It is clear that the finite element square matrix $\N:=[c_1|\cdots|c_n]\in\mathbb{R}^{n\times n}$ is non-singular as per Proposition~\ref{propositionWellPosedness}, which means that $0$ is not an eigenvalue of $\N$. Therefore, its set of column vectors $c_i,i=1...,n$ form a basis for the column space Col$(\N)$.
 For any non zero vector $z\in\mathbb{R}^{n}$ with positive entries, there exists a sequence of coefficients $(\alpha)_{j}$ such that we have $z=\sum_{j=1}^{n} \alpha_j c_j$ 
where $\alpha=N^{-1}z$ is clearly unique.

Let us assume that the column vector $\tilde z:=\begin{bmatrix}z\\0\end{bmatrix}$ is also linear combination of columns of the augmented matrix $\begin{bmatrix}\N\\ z^T\end{bmatrix}$. So $\tilde z = \sum_{j} \alpha_{j} \tilde c_{j}$ (where $\tilde c_{j}=[c_j^T,z_j]^T$ , in particular, if we take the bottom line of this linear combination we have $\sum_{j}\alpha_j z_{j}=0$ or equivalently 
$$z^T\alpha = 0$$
which leads to $z^TN^{-1}z=0, \, \forall z\in\mathbb{R}^{n}, \|z\|_2\neq0$. This contradicts $\N$ is non-singular. 
\end{proof}

\subsection{The Algorithm}
 We present our method that combines Newton's method for the resolution of nonlinear PDEs, combining the Momentum and Energy equations, together with the nanoparticle concentration advection dominated at steady state. 
 The physical parameters dependency in the nanoparticle concentration differs from one correlation to another, see for instance~\cite{ho_liu_chang_lin_2010,khanafer_vafai_2017,ASTANINA2018532} without being exhaustive. In most cases a highly nonlinear term involving the nanoparticle concentration appear in the correlation (generally coming from non-linear fitting procedure). In order to avoid any differentiation in the Newton's method with respect to the nanoparticle concentration we will split the resolution method into two and update all variables accordingly in iterative fashion as demonstrated in following Algorithm~\ref{algo}. 
 \bigskip
 
\begin{algorithm}[H]
\SetKwInOut{Parameter}{Parameters}
  \KwIn{$tol,\u^{0},p^{0},\Temp^{0}$}
  \For{$Ra_{nf} \quad in\quad (10^{4}\cdots 10^{8})$} {
  $\epsilon=1$;$k=1$;$\phih^{0}=(\int_{\Om}\dw)^{-1}$\;\emph{Initialization of the nanoparticle concentration}\;
  \While{$\epsilon\geq$ tol}{
    \tcc{Using $\phih^{k-1}$}
     \text{Solve for} $ \begin{bmatrix}\delta\u^{k} ,\delta p^{k} , \delta \Temp^{k}\end{bmatrix}^T$ \text{following Eq.}~\eqref{MomentumEnergydeltaUpdate}\;
     $\epsilon_{m,e} \gets \dfrac{[\delta\uh^{k} ,\delta p_{h}^{k} , \delta \Temp^{k}]^T[\delta\uh^{k} ,\delta p_{h}^{k} , \delta \Temp^{k}]}{[\uh^{k},p_{h}^{k},\Temp^{k}]^T[\uh^{k},p_{h}^{k},\Temph^{k}]}$\;
     \text{Update} $\begin{bmatrix}\uh^{k+1} ,  p_{h}^{k+1} , \Temph^{k+1}\end{bmatrix}^T$ \text{following  Eq.}~\eqref{MomentumEnergyUpdate}\;
     \text{Solve for} $\phih^{k+1}$  \text{under the mean constraint Eq.}~\eqref{LinearSystemFEMnanoparticle}\;
     \tcc{Nanoparticle concentration with the SUPG stabilization}
     $k\gets k+1$\;
  }
   $k\gets 0$\;
  }
\caption{Combined Newton’s method and SUPG for nanofluid equation\label{algo}}
\end{algorithm}
\bigskip
As it is showing in Algorithm~\ref{algo} a split procedure is adopted, where the tangent equation only concern the momentum and energy equations leading to an update, through Newton’s method, of the velocity and temperature respectively.
Although, these later two equations are dependent upon the nanoparticle concentration through the mixture fluid density,
viscosity and thermal conductivity. The proposed method assumes that the velocity and the temperature are constant through one iteration of Newton’s method. Their update will then be ensured within the next iteration after an exact resolution (through LU decomposition) of the nanoparticle concentration (with SUPG) based on the new variables of the velocity and temperature coming out of the previous Newton’s iteration. This alternating combination is applied throughout the iterations and leads to convergence for all variables involved. The split approach has lead to a simple yet effective implementation of Newton’s method involving highly non-linear parameters. Therefore, it skipped the tedious calculation of the tangent equations of the whole coupled system of four PDEs. Furthermore, less memory storage is then deployed hence rapid calculation. The above procedure is thus repeated with a predefined
set of increasing Rayleigh numbers.
\section{Numerical experiments and validations}\label{NumRes}
In this section, we investigate the numerical treatment of the heat transfer enhancement in a differentially heated enclosure using variable thermal conductivity and variable viscosity of the alumina-water nanofluid (Al$_2$O$_3$-water). The validation of numerical scheme is done through two processes. The first focuses on the validation of the numerical scheme relative to the resolution of the momentum and energy equations regardless of the volume fraction. In this case we consider the pure water heat transfer calculation in a square cavity. The second considers the comparison of the present numerical results with available numerical and experimental data. 

\subsection{Numerical schemes validations}
\begin{figure}[!hptb]
\centering
\includegraphics[width=6cm,height=6cm]{./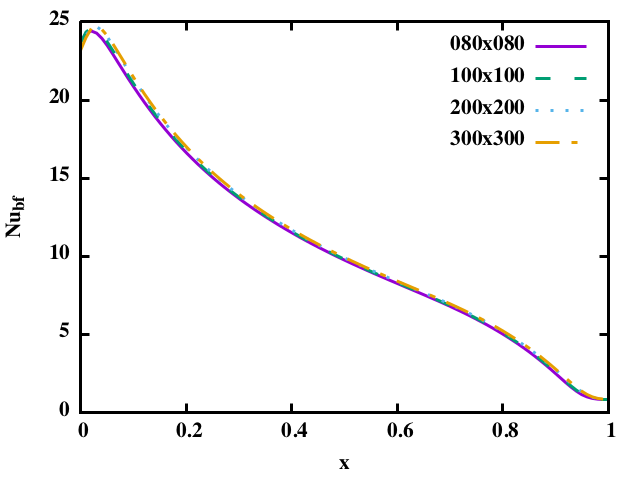}
\includegraphics[width=6cm,height=6cm]{./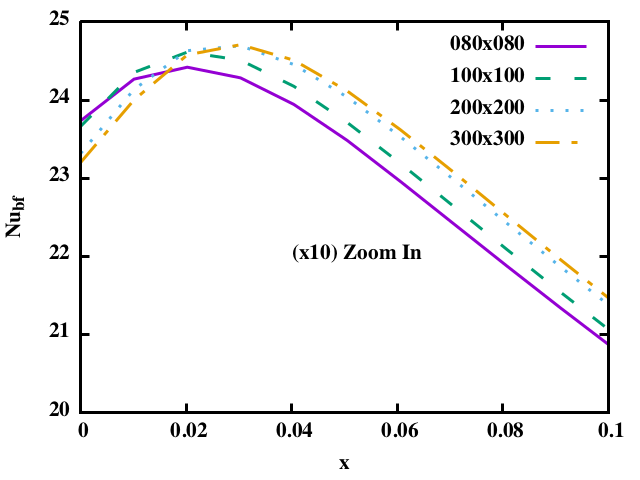}
\caption{Mesh sensitivity; Results present Nusselt (base fluid) number plotted along the Heated wall of the cavity.}\label{MeshCalibrationNusselt}
\end{figure}

We present in Figure \ref{MeshCalibrationNusselt} the mesh sensitivity results for the base fluid (only the Newton's solver). 

\begin{figure}[!hptb]
\centering
\includegraphics[width=6cm,height=6cm]{./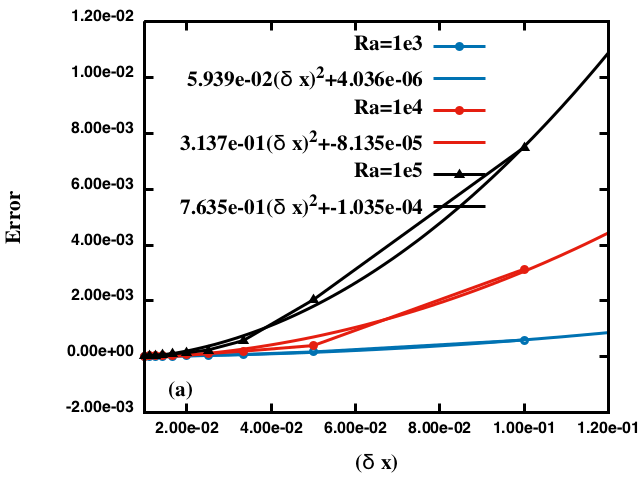}
\includegraphics[width=6cm,height=6cm]{./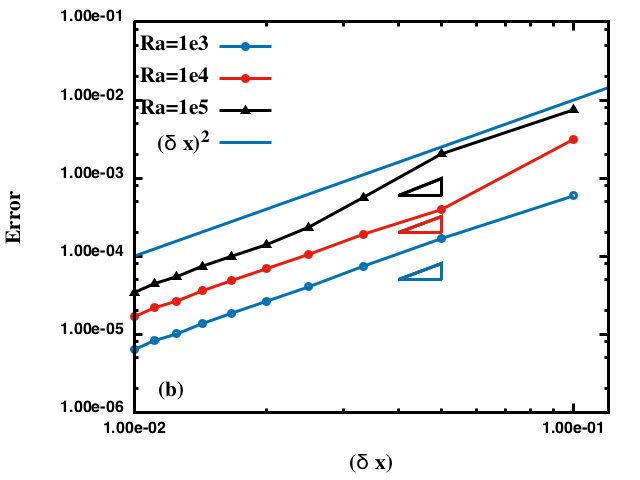}
\caption{Order of Convergence, with respect to the mesh size, of the presented scheme for the nanoparticle concentration with SUPG stabilization.}\label{OrderOfConvergenceNanoParticleEquation}
\end{figure}
Figure\ref{OrderOfConvergenceNanoParticleEquation} shows the quadratic convergence of the proposed numerical scheme with the use of the SUPG stabilization technique. In addition, as we shall explain in the sequel, we enforced the bulk of the nanoparticle concentration to be of mean value equal to $1$, through minimization under constraint problem. A continuous $\textbf{P}2$ finite element was used for the advection dominated nanoparticle concentration equation. Results show that the numerical scheme with the SUPG is stable and satisfies the convergence property of the FE discretization~\cite{BENEDETTO201618} even at high Rayleigh number for a turbulent flow~\cite{WERVAECKE2012109}. 

In Figure~\ref{figwithwithoutSUPG} we present the benefit of the implementation of the SUPG method in order to eliminate numerical artifacts showing up in the concentration of the nanoparticles, which is governed by an advection dominated problem. 
 \begin{figure}[!htpb]
 \centering
 \begin{tabular}{lccc}
 
  &{\bf$\Ra_\text{nf}$=1.E+06}&{\bf$\Ra_\text{nf}$=1.E+07}&{\bf$\Ra_\text{nf}$=1.E+08}\\
 \rotatebox{90}{\scriptsize\hspace{.5in}{\bf\large Without SUPG}}  
  &\includegraphics[width=4.5cm,height=4.5cm]{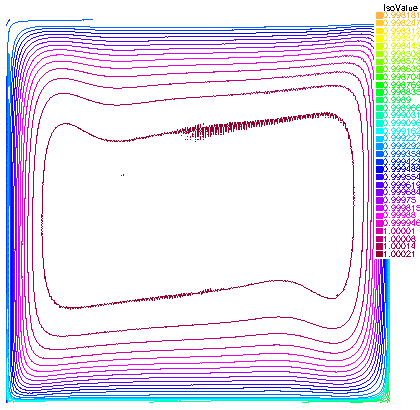} 
  &\includegraphics[width=4.5cm,height=4.5cm]{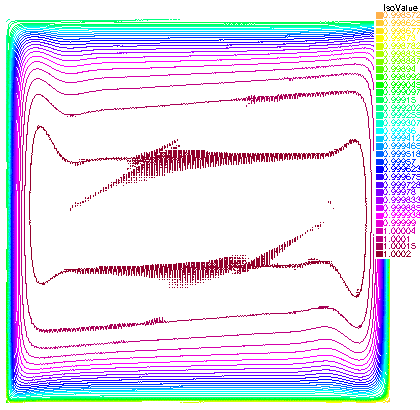} &\includegraphics[width=4.5cm,height=4.5cm]{./figures/WithoutSUPGRa1e+08_C0=0.03}\\
 \rotatebox{90}{\scriptsize\hspace{.5in}{\bf\large With SUPG}}  
  &\includegraphics[width=4.5cm,height=4.5cm]{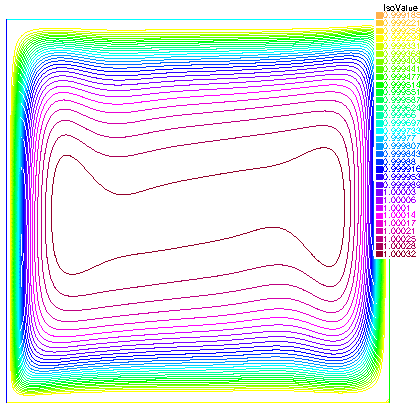}
  &\includegraphics[width=4.5cm,height=4.5cm]{./figures/WithSUPGRa1e+07_C0=0.03}  
  &\includegraphics[width=4.5cm,height=4.5cm]{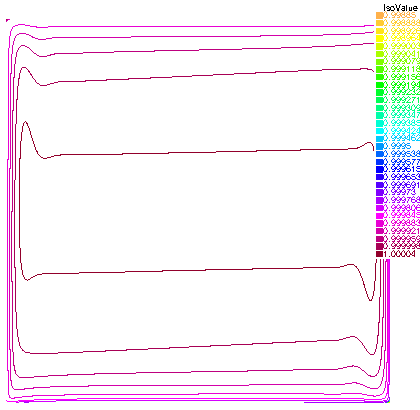}
 \end{tabular}
 \caption{Stabilizatoin effect of the SUPG on the nanoparticle solutions by elimination of numerical artifacts. Plot of isovalues of the concentration in the case of $\phi_{0}=3\%$}\label{figwithwithoutSUPG}
 \end{figure}
 It is clearly shown in Figure~\ref{figwithwithoutSUPG} that in the case without SUPG stabilisation the numerical spurious are more accentuating for high Rayleigh simulations, where indeed, high Peclet number takes place.  
 
\subsection{Validation -vs- Experimental results}
Our numerical scheme is validated using detailed comparison with the experimental data of Ho {\it et al.} in \cite{ho_liu_chang_lin_2010} using Al$_2$O$_3$ which their thermophysical properties are reported in Table~\ref{tab:physical_properties}. In their experimental investigations, the authors studied the heat transfer characteristics of alumina-water nanofluid enclosed in square cells. They used three different cell geometries and different heating conditions ($\TempdH-\TempdC$) to increase the Rayleigh number. Cases with $0, 1$ and $3\%$ of nanoparticle concentration were considered. It has been shown recently \cite{alosious2017experimental} that numerical experiments using continuous models over-predict the heat transfer enhancement. 
\begin{table}[h]
    \centering
    \begin{tabular}{lcc} \hline\\
    \text{Physical properties} & \text{Base fluid}  & Al$_2$O$_3$  \\ \hline
        $c_\text{p}$ (Jkg$^{-1}$ K$^{-1}$) & 4179 &  765\\
        $\rho$ (kg m$^{-3}$) & 997.1&3970\\
        $k$ (W m$^{-1}$K$^{-1}$) & 0.613&25\\
        $d_\text{p}$ (nm) &0.384 &47\\
        $\alpha\cdot 10^{-7}$ (m$^2$s$^{-1}$) &1.47 &82.23\\
        $\beta\cdot 10^{-5}$ (K$^{-1}$) & 21&0.85\\ \hline
    \end{tabular}
    \caption{Physical properties of base fluid and Al$_2$O$_3$ nanoparticles}
    \label{tab:physical_properties}
\end{table}
\begin{figure}[!htbp]
\begin{tabular}{ccc}
\includegraphics[width=5.2cm,height=5.4cm]{./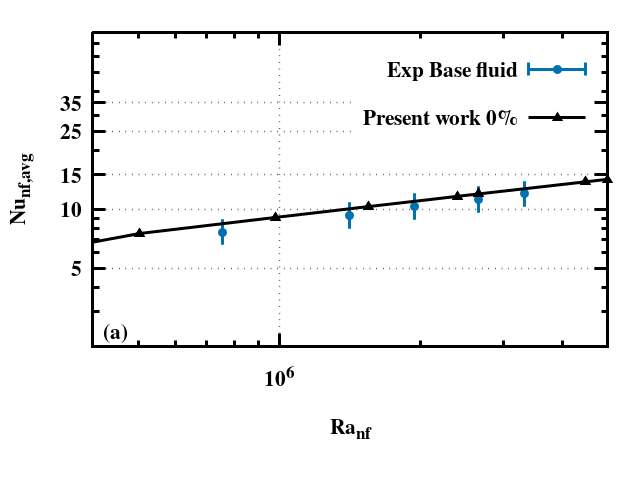}
\includegraphics[width=5.2cm,height=5.4cm]{./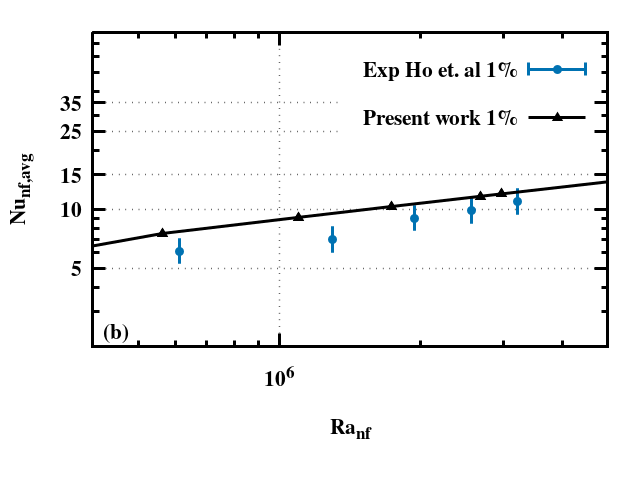}
\includegraphics[width=5.2cm,height=5.4cm]{./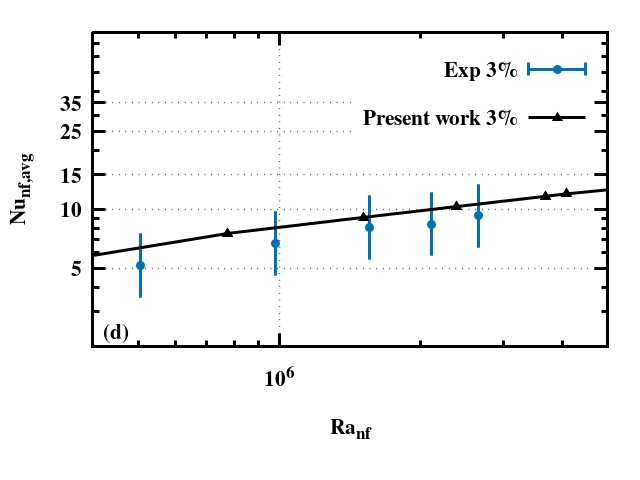}
\end{tabular}
\caption{Comparison of the numerical experiments with the experimental results of Ho \textit{et al.}\cite{ho_liu_chang_lin_2010} for the first cell case}\label{figcell1}
\end{figure}

Our validation and comparison of the numerical results against experimental finding of nanofluid heat transfer followed the cells presentation as in \cite{ho_liu_chang_lin_2010}. 
The corresponding results are reported in Figure~\ref{figcell1},\ref{figcell2} and \ref{figcell3}. In each of these figures we plot separately the cases of base fluid (left) the nanofluid concentration of $1\%$ (middle) and the nanofluid concentration of $3\%$ (left). To produce these results, we have used viscosity and thermal conductivity correlations as reported by Ho \textit{et al.} in \cite{ho_liu_chang_lin_2010} as follows.
\begin{eqnarray*}
\mathcal{C}_{\mu}(\phi) &=& 1 + 4.93 \cdot\phi + 222.4 \cdot \phi^2\\
\mathcal{C}_{k}(\phi) &=& (1+2.944 \cdot\phi + 19.672 \cdot \phi^2), 
\end{eqnarray*}
standing for the viscosity and thermal conductivity respectively.

\begin{figure}[!hptb]
\begin{tabular}{c}
\includegraphics[width=5.2cm,height=5.4cm]{./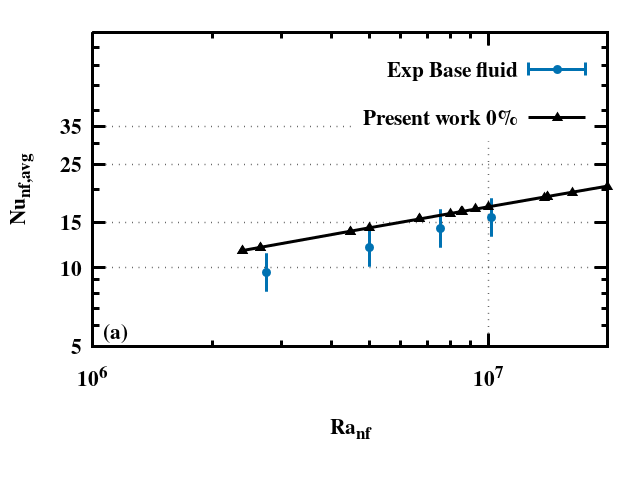}
\includegraphics[width=5.2cm,height=5.4cm]{./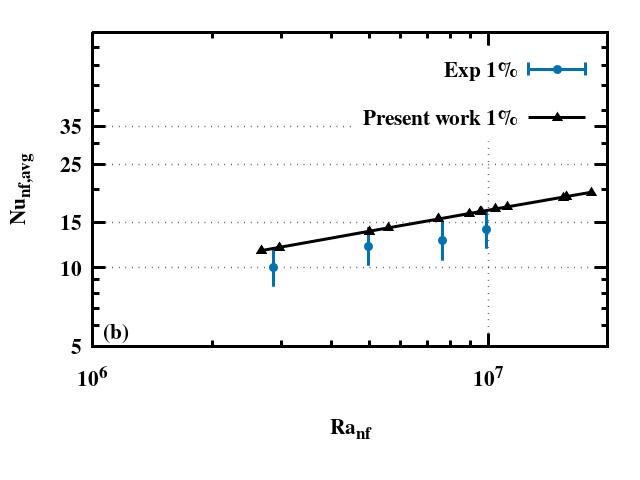} 
\includegraphics[width=5.2cm,height=5.4cm]{./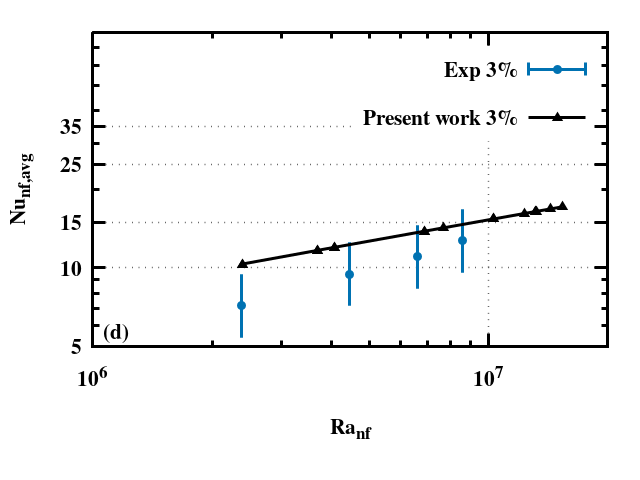}
\end{tabular}
\caption{Comparison of the numerical experiment with the experimental results of Ho \textit{et al.} \cite{ho_liu_chang_lin_2010} for the second cell case}\label{figcell2}
\end{figure}

\begin{figure}[!hptb]
\begin{tabular}{c}
 \includegraphics[width=5.2cm,height=5.4cm]{./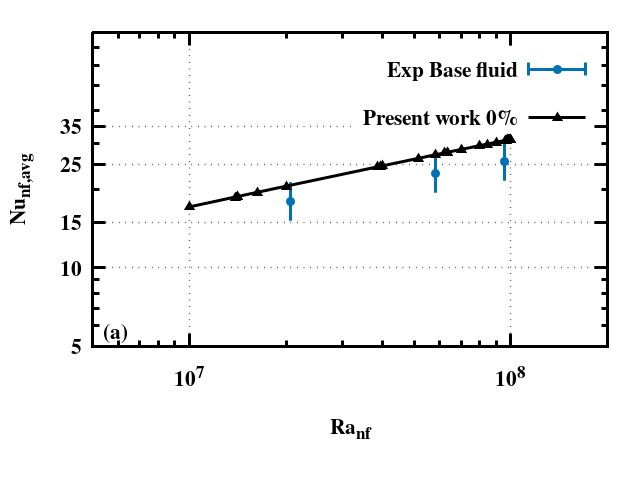}
\includegraphics[width=5.2cm,height=5.4cm]{./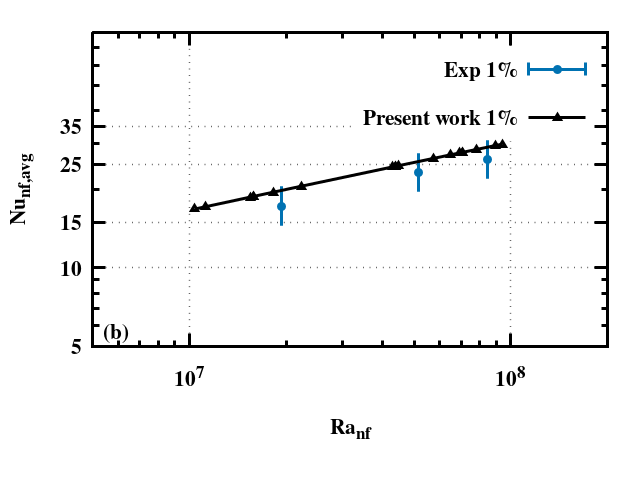}
 \includegraphics[width=5.2cm,height=5.4cm]{./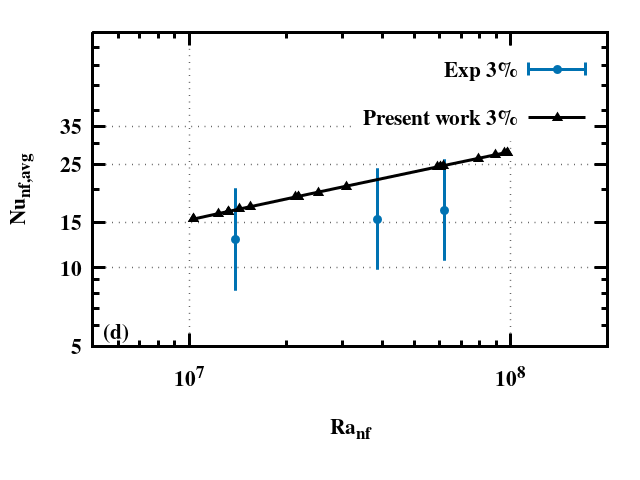} 
\end{tabular}
\caption{Comparison of the numerical experiment with the experimental results of Ho \textit{et al.} \cite{ho_liu_chang_lin_2010} for the third cell case}\label{figcell3}
\end{figure}

Based on the averaged Nusselt number values of the nanofluid, the present predictions show a reasonably good agreement with the experimental data for low and moderate Ra (Figures \ref{figcell1} and \ref{figcell2}). Whereas, for high Ra and high concentration, the numerical calculations are found to overestimate the Nusselt number. In fact, for the high $\Ra_\text{nf}$ number cases (Figure \ref{figcell3}), the experimental results show a more pronounced Nusselt number deterioration for the cases with nanofluid in comparison to the one with the base fluid. In their paper, Ho et al. \cite{ho_liu_chang_lin_2010} suspected that this behavior could be attributed to the transport mechanisms associated with nanoparticle-fluid interactions such as Brownian diffusion and thermophoresis in addition to the impact of the thermophysical properties changes. The present numerical predictions, however, indicate that the Buongiorno model which is intended to specifically account for these two mechanisms, is not actually able to mimic the equivalent heat transfer impairment. This might be suggesting that perhaps additional forces should be incorporated in the Buongiorno transport equations to provide a better physical model for the nanoparticles concentration, capable of reflecting the effect of the nanoparticles on the heat transfer impairment observed in the experimental data.

 \subsection{FE stabilized Buongiorno model -vs- multi-phase model}
This subsection is devoted to the comparison of the numerical results obtained by Algorithm \ref{algo} based on FEM discretization of the Buongiorno nanofluid transport model, against results of \cite{chen_wang_liu_2016} that are based on finite volume discretization using Fluent~\cite{ansys2016ansys} (commercial software). Here it is worth noticing that both methods deal with same physics of nanofluid transport, although, use different equations models. Indeed, the aforementioned results are based on a solid-liquid mixture model which solves the momentum equations with an additional term to account for the phases drift velocity, the continuity equation, and energy equation for the mixture. The model adopts algebraic expressions for the relative velocities which are then used to define the drift velocities (see Fluent's documentation for more details~\cite{ansys2016ansys}). Both numerical results are compared to experimental results of Ho {\it et al.}\cite{ho_liu_chang_lin_2010}.   

\begin{figure}[!hptb]
\centering
\begin{tabular}{cc}
\includegraphics[width=5.2cm,height=5.4cm]{./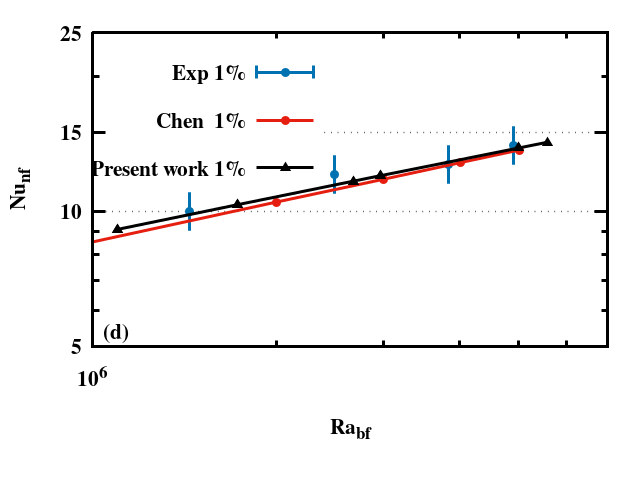}
&
\includegraphics[width=5.2cm,height=5.4cm]{./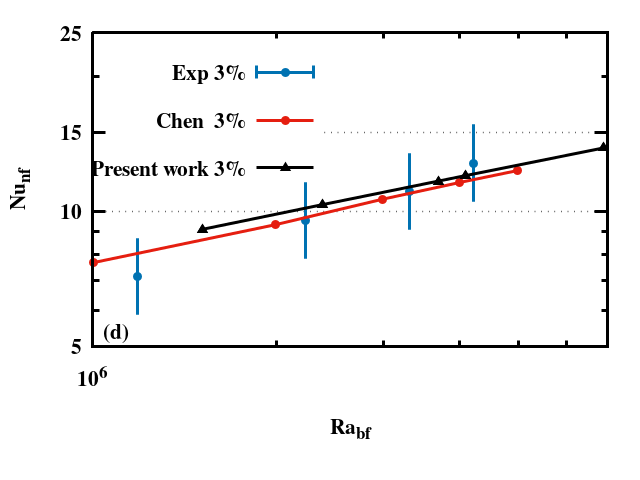}
\end{tabular}
\caption{Comparison of the numerical experiments with the numerical experiment of Chen \textit{et al.}\cite{chen_wang_liu_2016} for the range of $Ra$ in base fluid formulation.}
\label{Num-vs-chen}
\end{figure}
Figure~\ref{Num-vs-chen} depicts both numerical results and showcase a good agreement between the two methods. The range of data available for this validation and comparison is $1\cdot10^{6}$ to $6\cdot10^{6}$. For the case of $1\%$ of the nanoparticle bulk concentration, Buongiorno model seems to have better prediction of the heat transfer in term of the Nussult number of the nanofluid. However, this advantage becomes marginally on the side of the multi-phase model while we increase the bulk of nanoparticle concentration to $3\%$.  

\begin{figure}[!htbp]
\begin{tabular}{ccccc}
\rotatebox{90}{\scriptsize\hspace{.5in}$\Ra_\text{nf}$=1.E4}&
\includegraphics[width=4cm,height=4cm]{./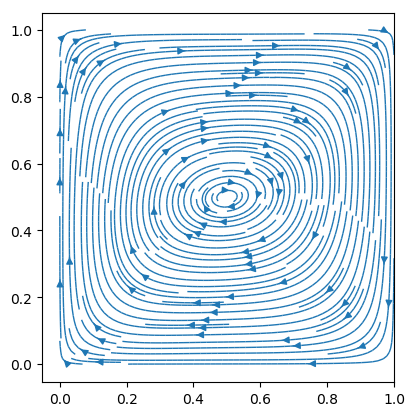}&
\includegraphics[width=4cm,height=4cm]{./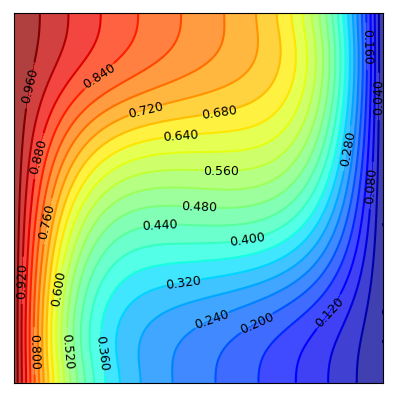}&
\includegraphics[width=4cm,height=4cm]{./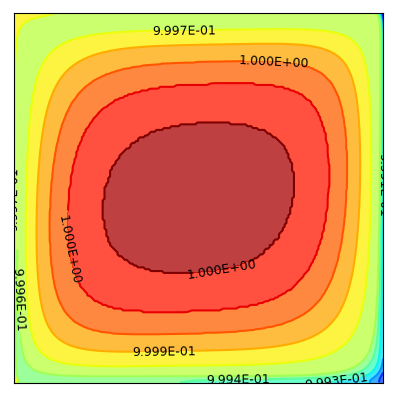}\\
\rotatebox{90}{\scriptsize\hspace{.5in}$\Ra_\text{nf}$=1.E5}&
\includegraphics[width=4cm,height=4cm]{./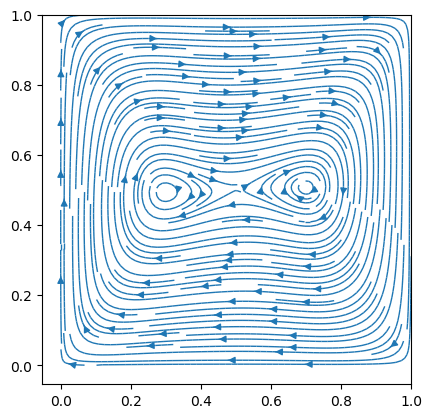}&
\includegraphics[width=4cm,height=4cm]{./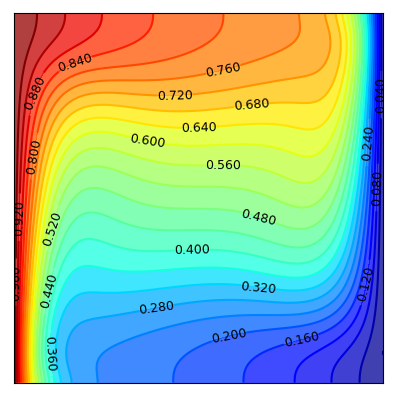}&
\includegraphics[width=4cm,height=4cm]{./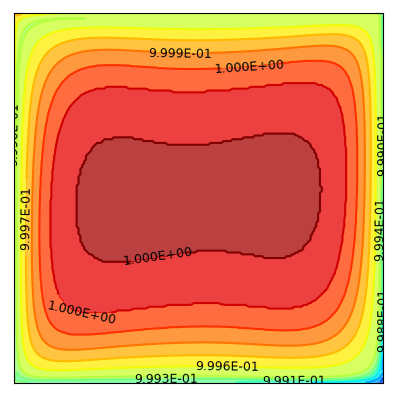}\\
\rotatebox{90}{\scriptsize\hspace{.5in}$\Ra_\text{nf}$=1.E6}&
\includegraphics[width=4cm,height=4cm]{./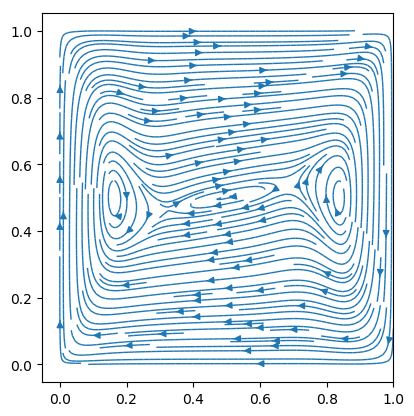}&
\includegraphics[width=4cm,height=4cm]{./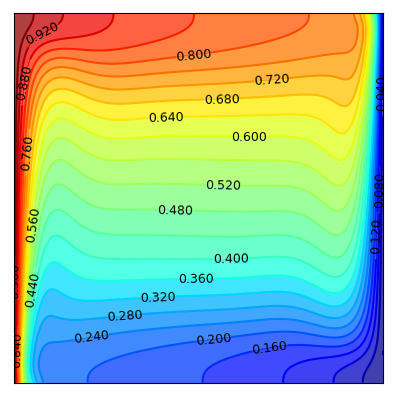}&
\includegraphics[width=4cm,height=4cm]{./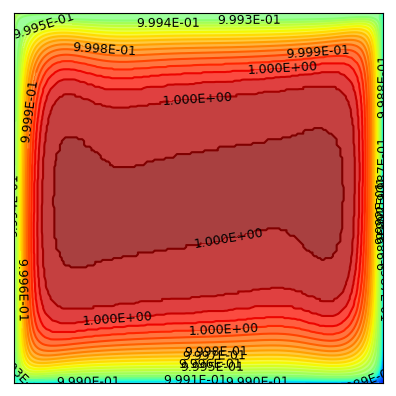}\\
\rotatebox{90}{\scriptsize\hspace{.5in}$\Ra_\text{nf}$=1.E7}&
\includegraphics[width=4cm,height=4cm]{./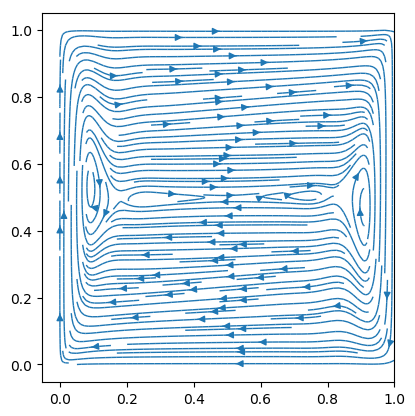}&
\includegraphics[width=4cm,height=4cm]{./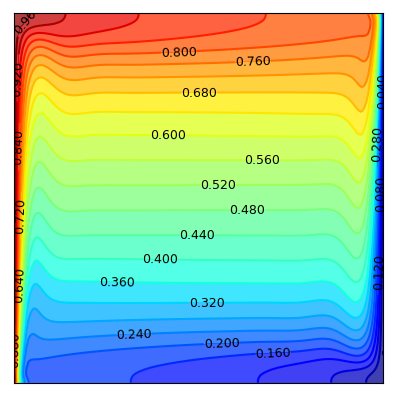}&
\includegraphics[width=4cm,height=4cm]{./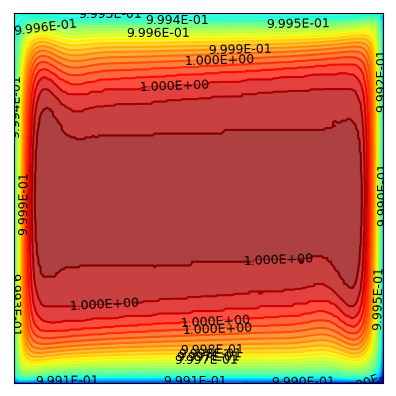}\\

\rotatebox{90}{\scriptsize\hspace{.5in}$\Ra_\text{nf}$=1.E8}&
\includegraphics[width=4cm,height=4cm]{./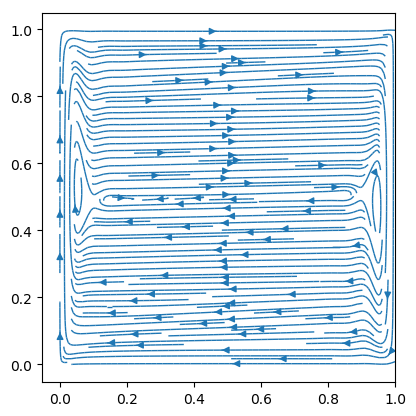}&
\includegraphics[width=4cm,height=4cm]{./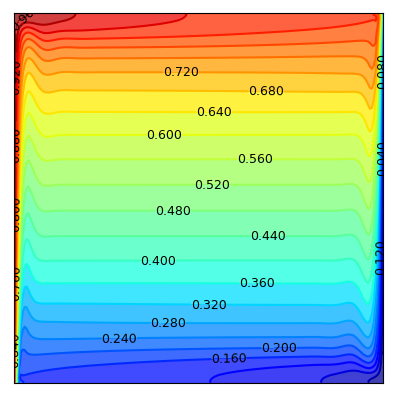}&
\includegraphics[width=4cm,height=4cm]{./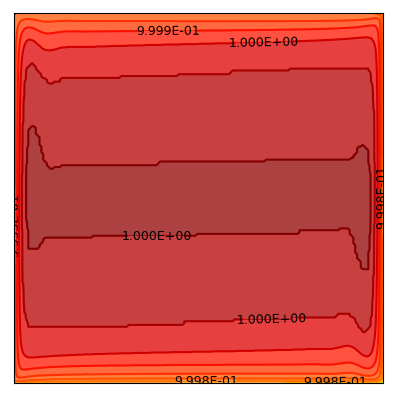}\\
\end{tabular}
\caption{(From left to right) velocity streamlines, heat distribution and nanoparticle concentration distribution. Plots, show (from top to bottom) the effect of increasing the Rayleigh number on the profile of the variables listed earlier. Numerical simulation was performed through finite element discretization, using Lagrange polynomials of degree two for the velocity and of degree 1 for the temperature and nanoparticle concentration respectively.}\label{StreamContourCencentration}
\end{figure}

Plots of streamlines contours, isothermal lines and nanoparticle concentration distribution are shown in Figure~\ref{StreamContourCencentration}, in which we vary the Rayleigh number $\Ra_\text{nf}$. These results show that the stream-lines exhibit recirculations that get flatten with the increase of the Rayleigh number. These recirculations get localized in the middle of the cavity and near the hot and cold walls. Whereas, isothermal lines become more and more horizontal which lead to high temperature gradient near the hot and cold walls. One, here, can directly see the increase of the heat transfer with the increase of the Rayleigh number. Besides, both the energy and the nanoparticle concentration are advected mainly by the buoyancy driven flow. Although, the energy equation, has a considerable contribution of the thermal diffusion compared to the nanoparticle transport equation, which has much smaller diffusion terms. Indeed, the Brownian diffusivity is very weak in comparison to the thermal diffusion and even negligible when compared to the advection term. This, in fact, would drastically affect the numerical approximation of this advection dominated equation. The SUPG artificial viscosity through the streamlines is, therefore, needed to stabilize the numerical scheme for high Rayleigh/Peclet number as shown in Figure~\ref{figwithwithoutSUPG}.  As the nanoparticle transport equation is mainly driven by the advection, one would expect the distribution of the concentration to be very similar to the stream-line of the flow. Indeed, this behaviour is observed in the third column of the plot in Figure~\ref{StreamContourCencentration}.
\begin{figure}[hptb]
\centering
\begin{tabular}{cc}
\includegraphics[width=5.5cm,height=5.5cm]{./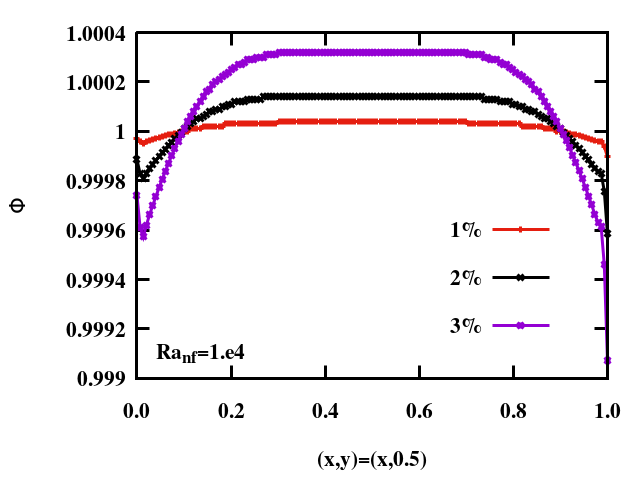}&
\includegraphics[width=5.5cm,height=5.5cm]{./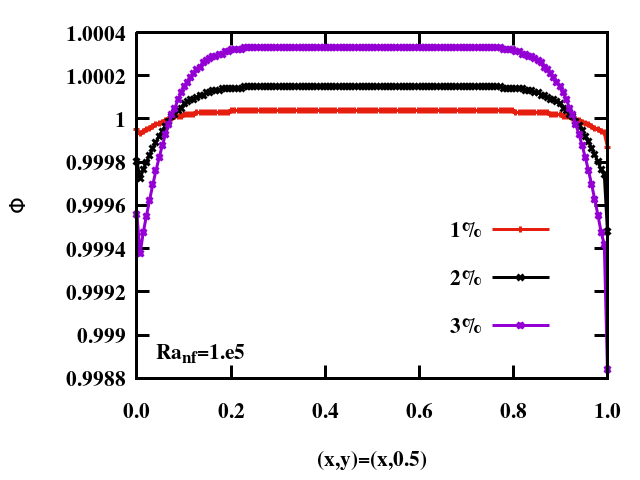}\\
\includegraphics[width=5.5cm,height=5.5cm]{./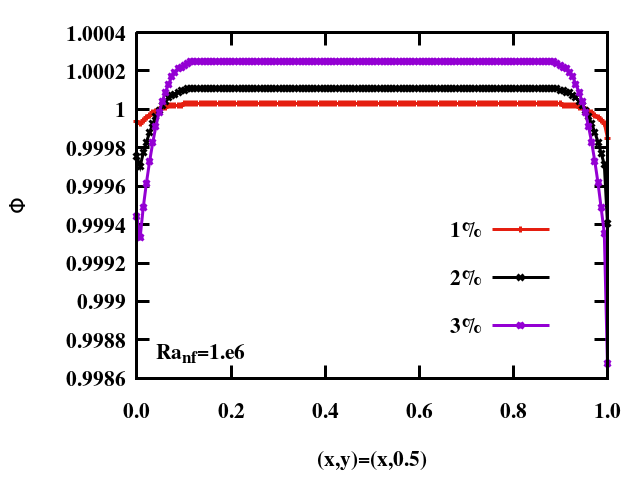}&
\includegraphics[width=5.5cm,height=5.5cm]{./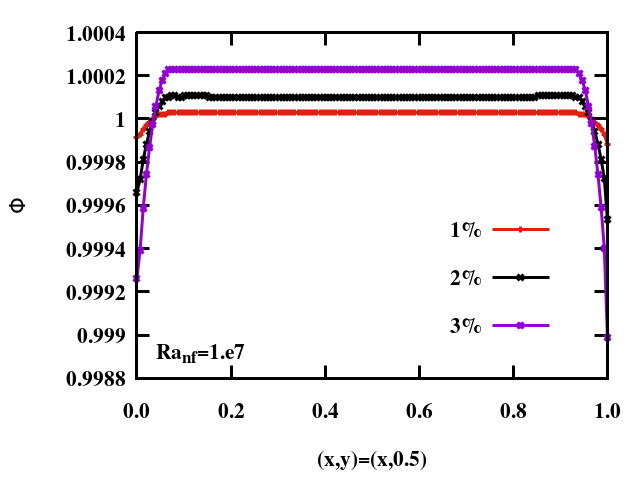}
\end{tabular}
\caption{Nanoparticle concentration profile along the $(x,\frac{1}{2})$ varying from $\%0$ to $\%3$.}\label{figPHIprofile}
\end{figure}
 
 Figure~\ref{figPHIprofile} displays the nanoparticles concentration profile along the line $(x,0.5)$ for different Rayleigh numbers and averaged concentration $\phi$  values ranging from $1\%$ to $3\%$. Although, the profiles exhibit only minor variation, of the order of $10^{-4}$ in magnitude along the horizontal line, very interesting phenomena occurring in the vicinity of the hot and cold walls can be observed. Near the cold wall, for instance, the concentration profiles decrease sharply as the nanoparticles get closer to the cold wall. The slop of this decrease is a function of both; the averaged nanoparticles concentration value and the Rayleigh number. If one is to recall the thermophoresis effect which tends to move particles from hotter to colder zones, then even in the near-the-wall zone in which only small convective velocity magnitude exist, this effect is not dominant and the nanoparticles are still being carried away by the recirculating motion of the carrier fluid. On the hot wall, however, a different picture is depicted.  Although it would not be a straight forward task to pin point the dominant term in this zone, the resulting force seems to favor a higher nanoparticles concentration at the wall vicinity followed by a sharper increase which can be translated to the fact that the convective term regains its dominant effect further away from the wall. All these nanoparticles that are removed from the walls region, by one of the two mechanisms described above, get accumulated in the center resulting in a relatively higher nanoparticles concentration. Unfortunately, as stated above, the concentration variation along this line remains marginal, hence, one would not expect it to provide a dramatically different outcomes from running the simulations with a constant concentration value, hence, assuming a single-phase model approach.

\section{Conclusion}
We presented in this article a numerical technique based on Newton-Raphson iterations to solve the nanofluid heat transfer problem in a square cavity with variable properties. In addition to its generality (regardless of the correlation used for the variable properties), our technique has mainly two advantages compared to the conventional use of Newton's iterations:
\begin{itemize}
    \item Firstly, it avoids the difficulty coming from the highly non-linear dependency upon the concentration in several correlations published in the literature. Indeed, the Jacobian (tangent problem) disregards nanoparticle concentration and only considers the velocity, pressure, and temperature, while nanoparticle concentration gets updated iteratively. Admittedly, the momentum and energy equations are solved through Newton's iterations, as the dominant (Navier-Stokes) equations is quadratic for the velocity variable, while the nanoparticle transport equation gets solved right after each iteration of the momentum and energy equations.
    \item Secondly, the proposed split leads to less memory consumption and allows the viscosity, density, and thermal conductivity of recirculating flow to be updated at each Newton's iteration.
\end{itemize}  
The numerical experiments based on the Finite Element discretization of the nanofluid heat transfer problem have been regularized using the SUPG method, which showed to be very effective in stabilizing the numerical solution by wiping the spurious oscillations without wrecking the solution. Here in particular we found that the ratio formula between the local Peclet number and global Raleigh number is a good combination for the regularization function used in the SUPG. Besides, our numerical scheme has been thoroughly validated against experimental results and showed a good agreement over a large spectrum of Rayleigh numbers ranging from $10^4$ to $10^8$. 
The present study also reveals that although the Buongiorno’s four equations-based nanofluid transport model, tested herein, is able to capture additional physical phenomena affecting the nanoparticles distribution, additional forces might have to be accounted for within this model in order to mimic heat transfer deterioration of similar amplitude as it is observed in the experimental data.               

\begin{appendices}
\section{Density dimensionless derivation}
Following Eq.\eqref{rhonf} we have
$$
\rhonf = (1-\phidim) \rhobf+\phidim \rhonp
$$
note also from Eq.~\eqref{rhobetanf},
$$
(\rho\beta)_\text{nf}  = (\rho\beta)_\text{bf}(1-\phidim) + (\rho\beta)_\text{np}\phidim.
$$
where $(\rho\beta)_\text{bf}=\rho_\text{bf}\beta_\text{bf}$ and $(\rho\beta)_\text{np}=\rho_\text{np}\beta_\text{np}$. Hence 
\begin{eqnarray*}
\dfrac{(\rho\beta)_\text{nf} }{\rhonf} &=& 
  \dfrac{\rho_\text{bf}\beta_\text{bf}(1-\phidim)}{ (1-\phidim) \rhobf+\phidim \rhonp}
+ \dfrac{ \rho_\text{np}\beta_\text{np}\phidim}{ (1-\phidim) \rhobf+\phidim \rhonp}\\
&=&   \dfrac{\beta_\text{bf}(1-\phidim)}{ (1-\phidim)+\phidim \dfrac{\rhonp}{\rhobf}}
+ \dfrac{ \phidim}{ (1-\phidim) +\phidim \dfrac{\rhonp}{\rhobf}}\bnp\\
&=& \bbf \left( \dfrac{(1-\phidim)}{ (1-\phidim)+\phidim \dfrac{\rhonp}{\rhobf}}
+ \dfrac{ \phidim}{ (1-\phidim) +\phidim \dfrac{\rhonp}{\rhobf}}\dfrac{\bnp}{\bbf} \right)\\
\end{eqnarray*}
Let 
$$\M:=\left( \dfrac{(1-\phidim)}{ (1-\phidim)+\phidim \dfrac{\rhonp}{\rhobf}}
+ \dfrac{ \phidim}{ (1-\phidim) +\phidim \dfrac{\rhonp}{\rhobf}}\dfrac{\bnp}{\bbf} \right)$$

\bigskip

  
\section{Momentum equation}
\begin{equation}
\left(\udim\cdot\nabladim\right)\udim  = \dfrac{-1}{\rhonf}\nabladim\pdim + \dfrac{1}{\rhonf} \nabladim\cdot \left(\munfdim \left( \nabladim\udim + (\nabladim\udim)^{t}\right) \right)
								+ \dfrac{\g}{\rhonf}(\rho_{\infty}-\rho_{c})
\end{equation}

Moving toward dimensionless variables the above equation writes
\begin{equation}
\dfrac{\alpha^2}{L^3} \left(\u\cdot\nabla\right)\u  = \dfrac{-\rhobf\alpha^2}{L^{2}\rhonf}\nabladim\pdim 
									+ \dfrac{\alpha\mubf}{L^3\rhonf} \nabladim\cdot \left(\munf \left( \nabladim\udim + (\nabladim\udim)^{t}\right) \right)
									+ \dfrac{\g\bnf}{\rhonf}\rho_{\infty}(\Temp_{h}-\Temp_{c})\Temp
\end{equation}
Multiplying the above equation by $\dfrac{L^3}{\alpha^2}$ we obtain
\begin{eqnarray*}
 \left(\u\cdot\nabla\right)\u  &=& \dfrac{\rhobf}{\rhonf}  \nabla p + \dfrac{\mubf}{\alpha\rhobf\left(1-\phi+\phi\dfrac{\rhonp}{\rhobf}\right)} \nabla\cdot \left(\munf \left( \nabla\u+ (\nabla\u)^{t}\right) \right)\\
 &&+\dfrac{\M L^3\g\bnf}{\alpha^2\rhonf}\rho_{\infty}(\Temp_{h}-\Temp_{c})\Temp
\end{eqnarray*}
which rewrites using the non-dimensional constants as follows
\begin{eqnarray}
 \left(\u\cdot\nabla\right)\u  &=& \pi^{m}_{1}(\phi) \nabla p + \pi^{m}_{2}(\phi) \nabla\cdot \left(\munf \left( \nabla\u+ (\nabla\u)^{t}\right) \right) +\pi^{m}_{3}(\phi)\Temp,
\end{eqnarray}
where 
\begin{eqnarray}
\pi^{m}_{1}(\phi)&=&\left( 1-\phi+\phi\dfrac{\rhonp}{\rhobf}\right)^{-1},\\
\pi^{m}_{2}(\phi)&=&\Pr\left( 1-\phi+\phi\dfrac{\rhonp}{\rhobf}\right)^{-1},\\
\pi^{m}_{3}(\phi)&=&\Pr \Ra_\text{nf} \M.
\end{eqnarray}
\section{Energy equation}
The dimensional energy equation writes
\begin{equation*}
\left(\udim\cdot\nabladim\Tempd\right) = \dfrac{1}{\cnf\rhonf} \nabladim\cdot\left(\knfdim\nabladim\Tempd\right) 
	+ \left(\dfrac{\rhonp\cnp}{\rhonf\cnf}\right) \left( \dfrac{D^{\star}_{\Temp}}{\TempdC} \nabladim\Tempd\cdot\nabladim\Tempd +D_{\omega}^{\star}\nabladim\phi\cdot\nabladim\Tempd   \right)
\end{equation*}
moving toward dimensionless variables the above equation writes
\begin{eqnarray*}
\dfrac{\alpha(\TempdH-\TempdC)}{L^2}\left(\u\cdot\nabla\Temp\right) &=& \dfrac{\kbf(\TempdH-\TempdC)}{L^2\cnf\rhonf} \nabla\cdot\left(\knf\nabla\Temp\right) \\
		&&+ \left(\dfrac{\rhonp\cnp}{\rhonf\cnf}\right)\dfrac{D_{\Temp_{c}}(\TempdH-\TempdC)^{2}}{\TempdC L^2} \left( D_{\Temp}\nabla\Temp\cdot\nabla\Temp  \right)\\
		&&+ \left(\dfrac{\rhonp\cnp}{\rhonf\cnf}\right)\dfrac{\phi_\text{b}D_{\omega_{c}}(\TempdH-\TempdC)}{L^2} \left(D_{\omega}\nabla\phi\cdot\nabla\Temp   \right).
\end{eqnarray*}
Multiplying the above equation by $\dfrac{L^2}{\alpha(\TempdH-\TempdC)}$ we obtain 
\begin{eqnarray*}
\left(\u\cdot\nabla\Temp\right) &=& \dfrac{\kbf}{\alpha\cnf\rhonf} \nabla\cdot\left(\knf\nabla\Temp\right) \\
		&&+ \left(\dfrac{\rhonp\cnp}{\rhonf\cnf}\right)\dfrac{D_{\Temp_{c}}(\TempdH-\TempdC)}{\alpha\TempdC} \left( D_{\Temp}\nabla\Temp\cdot\nabla\Temp  \right)\\
		&&+ \left(\dfrac{\rhonp\cnp}{\rhonf\cnf}\right)\dfrac{\phi_\text{b}D_{\omega_{c}}}{\alpha} \left(D_{\omega}\nabla\phi\cdot\nabla\Temp   \right),
\end{eqnarray*}

which rewrites using non-dimensional variables as follows

\begin{eqnarray*}
\left(\u\cdot\nabla\Temp\right) &=& \pi^{e}_{1} \nabla\cdot\left(\knf\nabla\Temp\right) 
						   +\pi^{e}_{2}  \left( D_{\Temp}\nabla\Temp\cdot\nabla\Temp  \right)
						   +\pi^{e}_{3} \left(D_{\omega}\nabla\phi\cdot\nabla\Temp   \right)
\end{eqnarray*}

Where 
\begin{equation*}\begin{array}{ccccc}
 \pi^{e}_{1} &=& \dfrac{\kbf}{\alpha\cnf\rhonf}&=&\left(\phi+(1-\phi)\dfrac{\rhobf\cbf}{\rhonp\cnp}\right)^{-1}\\
 \pi^{e}_{2} &=&  \left(\dfrac{\rhonp\cnp}{\rhonf\cnf}\right)\dfrac{\phi_\text{b}D_{\omega_{c}}}{\alpha}&=&\dfrac{\St\Pr}{\Sc}\dfrac{\TempdH-\TempdC}{\TempdH}\left(\phi+(1-\phi)\dfrac{\rhobf\cbf}{\rhonp\cnp}\right)^{-1}\\
 \pi^{e}_{3} &=&\left(\dfrac{\rhonp\cnp}{\rhonf\cnf}\right)\dfrac{D_{\Temp_{c}}(\TempdH-\TempdC)}{\alpha\TempdC}&=&\dfrac{\Pr}{\Sc}\phi_\text{b}\left(\phi+(1-\phi)\dfrac{\rhobf\cbf}{\rhonp\cnp}\right)^{-1}.
 \end{array}
\end{equation*}
\section{Nanoparticle transport equation}
The particle dimensional equation writes 
$$
\nabladim\cdot\nabladim\phidim = \nabladim\cdot\left( D_{\omega}^{\star} \nabladim\phidim+\dfrac{D_{\Tempd}^{\star}}{\TempdC}\nabladim\Tempd\right)
$$
using the non-dimensional equations the above equation writes 
$$
\dfrac{\alpha}{L^2}\phi_\text{b}\nabla\cdot\nabla\phi = \dfrac{\phi_\text{b}D_{\omega_{c}}}{L^2}\nabla\cdot\left( D_{\omega} \nabla\phi\right)
 										 +\dfrac{D_{\TempdC} (\TempdH-\TempdC)}{ L^2\TempdC} \nabla\cdot\left( D_{\Temp}\nabla\Temp\right).
$$

Multiplying the above by $\dfrac{L^2}{\alpha\phi_\text{b}}$ we obtain 

$$
\phi_\text{b}\nabla\cdot\nabla\phi = \dfrac{D_{\omega_{c}}}{\alpha}\nabla\cdot\left( D_{\omega} \nabla\phi\right)
 										 +\dfrac{D_{\TempdC}^{\star} (\TempdH-\TempdC)}{\phi_\text{b}\alpha\TempdC} \nabla\cdot\left( D_{\Temp}\nabla\Tempd\right).
$$

$$
\nabla\cdot\nabla\phi = \pi^{p}_{1}\nabla\cdot\left( D_{\omega} \nabla\phi\right)
				+\pi^{p}_{2} \nabla\cdot\left( D_{\Temp}\nabla\Tempd\right).
$$
where
\begin{equation*}
\begin{array}{ccccc}
\pi^{p}_{1} &=&  \dfrac{D_{\omega_{c}}}{\alpha}&=&\dfrac{1}{\Le},\vspace{.1in}\\
\pi^{p}_{2} &=& \dfrac{D_{\TempdC} (\TempdH-\TempdC)}{ L^2\TempdC}&=&\dfrac{\St\Pr}{\Sc}\dfrac{\TempdH-\TempdC}{\TempdC} \dfrac{1}{\phi_\text{b}}.
\end{array}
\end{equation*}

\end{appendices}

\bibliographystyle{elsarticle-num-names}
\bibliography{refs.bib}







\end{document}